\newtheorem{theorem}{Theorem}
\newtheorem{proposition}[theorem]{Proposition}
\newtheorem{corollary}[theorem]{Corollary}
\newtheorem{example}{Example}
\newtheorem{claim}{Claim}
\newtheorem{definition}{Definition}
\algnewcommand\algprocedure{\textbf{Procedure:}}
\algnewcommand\Procedurename{\item[\underline{\algprocedure}]}
\algnewcommand\algmain{\textbf{Main:}}
\algnewcommand\Main{\item[\underline{\algmain}]}
\algnewcommand\algorithmicinput{\textbf{Input:}}
\algnewcommand\Input{\item[\algorithmicinput]}
\algnewcommand\algorithmicoutput{\textbf{Output:}}
\algnewcommand\Output{\item[\algorithmicoutput]}
\def\@lbibitem[#1]#2{%
\if\relax\@extra@b@citeb\relax\else
\@ifundefined{br@#2\@extra@b@citeb}{}{%
\@namedef{br@#2}{\@nameuse{br@#2\@extra@b@citeb}}}\fi
\@ifundefined{b@#2\@extra@b@citeb}{\def\NAT@num{}}{\NAT@parse{#2}}%
\item[\hfil\hyper@natanchorstart{#2\@extra@b@citeb}\citep{#2}%
\hyper@natanchorend]%
\NAT@ifcmd#1(@)(@)\@nil{#2}}
\renewcommand\@biblabel[1]{#1} 
\renewcommand\newblock{\hskip .11em\@plus.33em\@minus.07em}
\begin{document}
 \title{Deterministic, Strategyproof, and Fair Cake Cutting}

\author[]{Vijay Menon}
\author[]{Kate Larson}
\affil[]{David R. Cheriton School of Computer Science\\ University of Waterloo \\ \email}
\date{}

\maketitle

\begin{abstract}
 We study the classic cake cutting problem from a mechanism design perspective, in particular focusing on deterministic mechanisms that are strategyproof 
and fair. We begin by looking at mechanisms that are non-wasteful and primarily show that for even the restricted class of piecewise constant valuations there exists no 
direct-revelation mechanism that is strategyproof and even approximately proportional. Subsequently, we remove the non-wasteful constraint and show another impossibility 
result stating that there is no strategyproof and approximately proportional direct-revelation mechanism that outputs contiguous allocations, again, for even the 
restricted class of piecewise constant valuations. In addition to the above results, we also present some negative results when considering an approximate notion of 
strategyproofness, show a connection between direct-revelation mechanisms and mechanisms in the Robertson-Webb model when agents have piecewise constant valuations, and 
finally also present a (minor) modification to the well-known Even-Paz algorithm that has better incentive-compatible properties for the cases when there are two or three 
agents.  
\end{abstract}
\section{Introduction}

Imagine a scenario where there is heterogeneous divisible good that is to be divided among a certain set of $n$ agents. For an appropriately chosen  notion of 
fairness, ideally, we would like this resource to be divided fairly among these $n$ agents and so a natural question that arises is on how one would accomplish this. The 
cake-cutting problem metaphorically refers to this very scenario and it represents a fundamental problem in the theory of fair division. More formally, in the cake 
cutting problem, the cake is modelled as the interval $[0, 1]$ and each of the $n$ agents is assumed to have a valuation function over the cake. The goal, as described above, 
is to partition the cake so that it is fair according to some chosen notion of fairness. Starting with the work of \citet{steinhaus48}, it has been studied over the last 
several decades by mathematicians, economists, and political scientists  (see e.g.\ the books by \citet{brams96} and \citet{rw98}), and more recently has attracted the 
attention of computer scientists (see e.g.\ the survey by \citet{pro16}) partly due to the nature of the challenges involved---that often require an algorithm design or 
complexity point of view---and partly due to its relevance in the design of multiagent systems \citep{cheval06}. 

Most work in the cake cutting literature focuses on addressing the above described scenario and so most of it is concerned with computing a fair allocation using 
{proportionality} or {envy-freeness} as the main fairness criterion (see \citep{aziz16} and also \citep{pro16} for a recent survey). While addressing the above scenario is 
indeed the core part of the 
problem, the fact that the valuation functions of the agents may be their private information entails that the designed protocols or mechanisms be strategyproof---meaning 
that there is no incentive to misreport one's valuation function---for if otherwise the agents can lie about their valuation functions and potentially benefit from it. 
Therefore, the focus of this paper, like in \citep{chen10, chen13, mossel10, maya12, aziz14, branzei15}, is to look at mechanisms that are not only fair, but also 
strategyproof. 

In this paper, we look at deterministic mechanisms and we are primarily focused on the direct-revelation model---one where the agents reveal their entire valuation function. 
Our main objective is to better understand the limits of determinism when it comes to imposing strategyproofness and fairness (or approximate notions of either of 
them)---i.e., to see what is or is not achievable with deterministic mechanisms---and to this end we make the following contributions. 
\begin{enumerate}[label={\alph*})]
  \item In Section~\ref{nwm} we begin by looking at non-wasteful mechanisms and we show the following strong impossibility which in turn strengthens an impossibility result 
given \citet[Theorem 7]{aziz14}. 
  \begin{enumerate}[label={\arabic*}.]
 \item \textit{For any $n \geq 2$ and $0 \leq \epsilon_1, \epsilon_2 < \frac{1}{n}$ such that $3\epsilon_1 + \epsilon_2 < \frac{1}{n}$, there is no deterministic and 
non-wasteful mechanism for $n$ agents with piecewise constant valuations that is $\epsilon_1$-strategyproof and $\epsilon_2$-proportional.}
\end{enumerate}

\item In Section~\ref{wm} we remove the non-wasteful constraint and provide the following results. The first one addresses a special case (one where the 
allocations are restricted to contiguous allocations) of an open question posed by \citet{chen10, chen13}. The second result establishes a connection between direct-revelation 
mechanisms and mechanisms in the Robertson-Webb model for the case when agents have piecewise constant valuations. And finally, the third one is a positive result where we 
present a (minor) modification to the well-known Even-Paz algorithm that has better incentive-compatible properties for the cases when there are two or three agents.  
    
\begin{enumerate}[label={\arabic*}., start=2]
  \setcounter{enumi}{5}
 \item  \textit{For any $n \geq 2$ and $0 \leq \epsilon < \frac{1}{n}$, there is no deterministic mechanism for $n$ agents with piecewise constant valuations that makes 
contiguous allocations and is strategyproof and $\epsilon$-proportional.}

 \item \textit{Let $P_1, P_2, P_3$ denote the property of strategyproofness, proportionality, and envy-freeness, respectively, and let $P \subseteq \{P_1, P_2, P_3\}$. 
Given 
$k$, an upper bound on the maximum number of breakpoints in any agent's valuation function, and any $n \geq 2$, if there exists a (finite) deterministic direct-revelation 
mechanism for $n$ agents with piecewise constant valuations that satisfies $P_i$, $\forall P_i \in P$, then, for any $\epsilon > 0$, there exists a mechanism in the 
Robertson-Webb model for $n$ agents with piecewise constant valuations that asks at most $n \left\lfloor \frac{2k}{\epsilon} \right\rfloor $ queries on each input and is 
$\epsilon$-$P_i$, $\forall P_i \in P$.}

 \item  \textit{There exists a mechanism that is deterministic, proportional, and $\epsilon$-strategyproof, where $\epsilon = 1 - \frac{3}{2n}$ or $\epsilon = 1 - 
\frac{3}{2n} + \frac{1}{2n^2}$ depending on whether $n$ is even or odd, respectively. This mechanism achieves a better worst-case bound (as compared to the Even-Paz 
algorithm) in terms of $\epsilon$-strategyproofness for the cases when there are two or three agents. 
}
\end{enumerate}
\end{enumerate}
\subsection{Related Work}
The cake cutting problem has been studied using two input models. The first, and in fact the most ubiquitously used model in the literature, is the Robertson-Webb model where 
the mechanism proceeds through a sequence of queries on the valuation functions of the agents. The second is the direct-revelation model where the input to the mechanism is 
the complete valuation functions of the agents. Given this, we can classify the work on strategic aspects of cake cutting into two classes based on the input model used and 
look at each of them separately. (Note that there is a large body of literature on cake cutting that does not consider strategic issues. For this, we refer the 
reader to a very recent survey by \citet{pro16}.)

There are mainly five related papers \allowbreak \citep{chen10,chen13,mossel10,maya12,aziz14,li15,ali17} that operate in the direct-revelation model and at the same time look 
at strategic issues in cake cutting. Among them, the ones that are most relevant to results in this paper are the works of \citet{chen10, chen13} and \citet{aziz14}. 
\citet{chen10, chen13} were the first to look at strategyproof cake cutting. In particular, they considered piecewise uniform, piecewise constant, and 
piecewise linear valuations, and their main result was a deterministic, strategyproof, envy-free, proportional, and Pareto-optimal mechanism for the case when the agents have 
piecewise uniform valuations. In addition to their main result, they also presented randomized algorithms that are truthful in expectation, proportional, and envy free for 
piecewise linear valuations. One of our results here (Theorem~\ref{SP-prop-cont}) addresses a special case of an open question they had posed in their paper regarding 
deterministic and strategyproof mechanisms for the case when the agents have piecewise constant valuations. 

The other most relevant work that operates in the direct-revelation model is that of \citet{aziz14}. They present two mechanisms, CCEA and MEA, for piecewise 
constant valuations, where CCEA is deterministic, robust envy-free\footnotemark, and non-wasteful, and MEA is deterministic, Pareto-optimal, and envy-free. Although 
both of them generalize the mechanism proposed by \citet{chen10, chen13} when restricted to piecewise uniform valuations, they do not remain strategyproof for piecewise 
constant valuations. Additionally, they also showed that there is no mechanism that is strategyproof, robust proportional\footnotemark[1], and non-wasteful for piecewise 
constant valuations. One of our results here (Theorem~\ref{nwImp}) strengthens the latter impossibility result.  

\footnotetext[1]{Robust envy-freeness and robust proportionality are much stronger notions than envy-freeness and proportionality, respectively; see \citep{aziz14} for 
more details.}

Besides the two papers mentioned above, \citet{mossel10} provided an impossibility result regarding the existence of deterministic mechanisms that guarantee $\frac{1}{n}$ 
when all the agents have the same valuation function and greater than $\frac{1}{n}$ otherwise. They also showed the existence of a randomized mechanism that is strategyproof 
in expectation and is proportional and envy-free (this existence result was also shown by \citet{chen10}). The work of \citet{maya12} considers the goals of strategyproofness 
and Pareto efficiency for two agents with piecewise uniform valuations and their main result shows that any strategyproof mechanism achieves at most $\approx 0.93$ fraction 
of the optimal welfare. \citet{li15} study the problem of designing strategyproof cake-cutting mechanisms in the presence of externalities when the agents have 
piecewise uniform valuations. They propose a way to model externalities and show some results on the existence of strategyproof mechanisms. Finally, \citet{ali17} study the 
problem of designing strategyproof and envy-free mechanisms with the goal of minimizing the number of pieces in the final allocation. They primarily focus on the case when the 
agents' valuations are constant over a single interval and they provide a mechanism that is strategyproof, envy-free, and cuts the cake into at most $2n-1$ pieces for this 
setting. 

Among papers that use the Robertson-Webb model, the two that are most related to this paper are the works of \citet{kurokawa13} and \citet{branzei15}. 
\citet{kurokawa13} looked at the problem of envy-free cake cutting and their main result was a parameterized protocol for piecewise linear valuations. Additionally, they also 
showed that there is no mechanism of complexity bounded only by a function of the number of agents and the total number of breakpoints that is strategyproof and 
envy-free for piecewise constant valuations.

The other most relevant paper that uses the Robertson-Webb model is that of \citet{branzei15}. \citet{branzei15} showed that any deterministic and strategyproof protocol 
is a dictatorship when there are only two hungry agents and for the case when there are more than two hungry agents they show that there is at least one agent who gets an 
empty piece. Additionally, they also provided a randomized mechanism that is strategyproof in expectation and $\epsilon$-proportional for any $\epsilon > 0$. Although our 
impossibility results here are not quite as strong as that of \citeauthor{branzei15}'s, we believe that it complements the same by considering a much more permissive input 
model.

Finally, we also briefly remark on the predefined divisible goods setting which has been extensively studied in fair division (see, for instance, \cite{cheung16} and the 
references therein). In the predefined goods setting, the task is to fairly allocate $m$ divisible items among $n$ agents, where the private information of an agent is the 
value she has for each of the items. So, in effect, the predefined goods setting can be considered as a special case of the setting of cake cutting with piecewise 
constant valuations---i.e., one where the valuation functions of all the agents have the same set of breakpoints---and therefore all the negative results in that setting 
carry over to the cake cutting setting. However, some of the nice positive results with regards to strategyproof mechanisms in this setting (see, 
e.g., \cite{cg13} and \cite{cheung16}) do not carry over to the cake cutting setting as here, informally, the difficulty in achieving strategyproofness arises from the fact 
that a mechanism has to guard against two types of manipulations: i) with regards to the breakpoints in the valuation functions and ii) with regards to the values assigned 
to the pieces between two consecutive breakpoints.

\section{Preliminaries} \label{prelims}

The cake---which is a heterogeneous divisible good---is modelled as the unit interval $[0, 1]$. A piece of cake $X$ is a finite union of disjoint (except at 
the boundaries of the intervals) subintervals of $[0,1]$. An interval is denoted by $I$ and the length of an interval $I = [x, y]$ is given by $| I | = y - x$. In any cake 
cutting instance, there are $n$ agents who want a share of the cake and we denote them by $[n] = \{1,\cdots, n\}$. Each agent $i \in [n]$ has a non-negative, integrable, and 
private value density function (also referred to as their utility/valuation function), $v_i: [0,1] \to \mathbb{R^+} \cup \{0\} $, that indicates how agent $i$ values 
different parts of the cake, and given a piece of cake $X$, the value for $X$ is defined by $V_i(X) = \int_{X} v_i(x) \,dx = \sum_{I \in X} \int_{x \in I} v_i(x) 
\,dx$. (Note that the fact that the valuation functions are integrable implies additivity---meaning for two disjoint intervals $I$ and $I'$, $V_i(I \cup I') = V_i(I) + 
V_i(I')$.) Without loss of generality, we assume that the valuation functions are \textit{normalized}, i.e., $V_i([0, 1]) = 1$ for all $i \in [n]$. Additionally, we say that 
agent $i$ is \textit{hungry} if $v_i(x) > 0$ for all $x \in [0,1]$.

\subsection{Input models}

The way we have defined the valuation functions above, it is not necessary that they have a discrete representation. Therefore, most of the work in cake cutting assumes a 
query model---commonly known as the Robertson-Webb query model---that allows only the following two types of queries. 
\begin{enumerate}
 \item Eval query: given an interval $[x, y]$, \textbf{eval}$(i, x, y)$ asks agent $i$ for its value for $[x, y]$, i.e., \textbf{eval}$(i, x, y)$ = $V_i(x, y)$.  
 \item Cut query: given a point $x$ and $r \in [0, 1]$, \textbf{cut}$(i, x, r)$ asks agent $i$ for the minimum (i.e., the leftmost) point $y$ such that $V_i (x, y) = r$. 
\end{enumerate}

Here we note that there exists an alternative formalization of the Robertson-Webb model where essentially it is required that all the cut points be defined by the agents and 
not by the center (see, e.g., \citep{woe07} and \citep{branzei15} where this formalization is used). However, here we use the slightly more permissive 
formalization as given by \citet{pro13,pro16} and as defined above, where (if necessary) the center can make additional cuts once it has finished asking all the 
queries. 

While the Robertson-Webb model is the most widely studied, there is recent work (e.g., \cite{chen10, chen13}, \cite{bei12}, \cite{aziz14}) which restricts the agents' 
valuation functions to ones that have a concise representation, thus in turn giving rise to another model (which we will refer to as the direct-revelation model) where the 
entire valuation functions of the agents are given as input to a mechanism. In this paper, we consider one fundamental class of restricted valuations, namely, piecewise 
constant valuations. An agent $i$ is said to have a piecewise constant valuation if $[0, 1]$ can be partitioned into a finite number of intervals such that $v_i$ is a 
constant over each interval (i.e., it is a step function). A special case of piecewise constant valuations are the piecewise uniform valuations where, for some constant $c$, 
the function can only take the values $0$ or $c$. One of the reasons why piecewise constant valuations are interesting is because they are expressive enough to be able to 
approximate any general valuation function. 

In this paper, we make use of both input models. All our negative (impossibility) results are in the direct-revelation model, while the algorithm we present is in the 
Robertson-Webb model. Note that the existence of a mechanism in the Robertson-Webb model implies an existence in the direct-revelation model, but the converse is not 
necessarily true, and hence our negative results, in particular, are strong. Also, all our negative results are true for even the restricted class of piecewise constant 
valuations and hence naturally carry over to anything more general (like piecewise linear valuations).  

\subsection{Properties of cake cutting mechanisms}
A direct-revelation cake cutting mechanism $\mathcal{M}$ takes the valuation functions $(v_1, \cdots, v_n)$ of the agents (which we refer to as a profile) as input and it 
outputs an allocation $A = (A_1, \allowbreak \cdots, A_n)$, where $A_i$ is the allocation to agent $i$, and $\forall i, j (\neq i)$, $A_i$ and $A_j$ are disjoint (except at 
the boundaries of the intervals). While in general $A_i$ can be any arbitrary piece of cake, sometimes we concern ourselves only with \textit{contiguous allocations} where 
each $A_i$ is a single interval. Throughout, we assume that the mechanism allocates (to some agent) all the pieces of cake that are valued at greater than zero by at least one 
of the agents. Additionally, we often use $\mathcal{M}_{i} (v_i, v_{-i})$ to denote $A_i$, where $v_{-i}$ denotes the valuation functions of all the agents except $i$. 

A mechanism is said to be \textit{non-wasteful} if it allocates every piece that is desired---meaning that the piece is assigned a value greater than zero---by at 
least one agent to an agent who desires it. More formally, if $I$ is a subinterval of $[0, 1]$ and $D(I) = \{i \in [n] \mid V_i(I) > 0 \}$, then a mechanism is non-wasteful if 
it always makes an allocation $A$ such that $\forall I, I \subseteq \cup_{i \in D(I)} A_i$. In this paper, we consider both non-wasteful and wasteful mechanisms. Additionally, 
for all our negative results we assume \textit{free disposal}, meaning that the mechanism can throw away pieces of cake that aren't valued (at greater than zero) by any of 
the agents without incurring a cost. Note that the existence of a mechanism without the free disposal assumption implies the existence of a mechanism with the free disposal 
assumption, but the converse is not necessarily true. Hence, in the context of negative results, such an assumption only makes them stronger. 

In this paper, we use (approximate) proportionality as our main fairness criterion. In the definition below, proportionality refers to the special case when $\epsilon = 0$. 
And so, throughout, when we say that a mechanism is proportional it implies that it is $0$-proportional. 

\begin{definition}[$\epsilon$-proportionality]
 For any $\epsilon \in \left[0, \frac{1}{n}\right)$, a mechanism $\mathcal{M}$ is said to be $\epsilon$-proportional if it always returns an allocation $(A_1, 
\cdots, 
A_n)$ such that $\forall i \in [n], V_i (A_i) \geq \frac{1}{n} - \epsilon$. 
\end{definition}

Along with (approximate) proportionality, the other major property we are concerned with is (approximate) strategyproofness. Informally, for any $\epsilon \in [0, 1]$, a 
cake cutting mechanism $\mathcal{M}$ is said to be $\epsilon$-strategyproof if an agent can gain a utility of at most $\epsilon$ by misreporting her valuation 
function. More formally, we have the following definition.

\begin{definition}[$\epsilon$-strategyproofness]
For any $\epsilon \in [0, 1]$, a mechanism $\mathcal{M}$ is said to be $\epsilon$-strategyproof if for every agent $i \in [n]$, and for all $v'_i, v_{-i}$, 
$V_i(\mathcal{M}_i(v_i, v_{-i})) \geq V_i(\mathcal{M}_i(v'_i, v_{-i})) - \epsilon$.  
\end{definition}

Strategyproofness refers to the special case in the above definition when $\epsilon = 0$. And so, throughout, when we say that a mechanism is strategyproof (or truthful) it 
implies that it is $0$-strategyproof, or, in other words, that for every agent truth telling is a (weakly) dominant strategy. Here we note that this is not the first time the 
notion of approximate strategyproofness has been considered. For instance, \citet{birrell11} considered it in context of voting and \citet{schummer04} considered it in the 
context of two agent two-good exchange economies. We refer the reader to a paper by \citet{lubin12} for a detailed discussion on approximate-strategyproofness. 

In addition to the above mentioned properties, we also talk about (approximate) envy-freeness and Pareto-efficiency (mainly in the context of results pertaining to previous 
work). For an $\epsilon \in [0, 1]$, a cake cutting mechanism is said to be $\epsilon$-\textit{envy-free} if it always returns an allocation $(A_1, \allowbreak \cdots, A_n)$ 
such that $\forall i, j \in [n], V_i(A_i) \geq V_i(A_j) - \epsilon$. Envy-freeness refers to the special case in the above definition when $\epsilon = 0$. A cake cutting 
mechanism is said to be \textit{Pareto-efficient} if it always returns an allocation $(A_1, \allowbreak \cdots, A_n)$ that is not Pareto-dominated---meaning that there is no 
other allocation $(A'_1, \cdots, A'_n)$ such that $\forall i \in [n], V_i(A'_i) \geq V_i(A_i)$, with the inequality being strict for at least one agent. Note that 
non-wastefulness, as defined above, can be considered as a weak form of efficiency where essentially all it is doing is restricting a mechanism from allocating a piece of zero 
value to an agent if there is some other agent who has a non-zero value for it. 

\section{Non-wasteful mechanisms} \label{nwm}

\citet{chen10,chen13} were the first to consider direct-revelation mechanisms for cake cutting and they proposed a polynomial-time deterministic mechanism that is 
strategyproof, 
proportional, envy-free, and Pareto-efficient for piecewise uniform valuations. In light of such a result, one natural question that arises is if there exists a mechanism for 
at least the more general class of piecewise constant valuations that is strategyproof, Pareto-efficient, and fair (for some notion of fairness). We already have an answer to 
this question due to \citet{sch96} who considered the predefined divisible goods setting and showed that the only mechanism that is strategyproof and Pareto-efficient 
is a dictatorship. And since this predefined goods setting is a special case of the setting with piecewise constant valuations, we know that there is no hope for achieving 
strategyproofness, Pareto-efficiency, and any notion of fairness. Therefore, in this context, another notion to consider is non-wastefulness which, as defined in 
Section~\ref{prelims}, is a weaker notion of efficiency. In fact, \citet{aziz14} considered this notion and showed an impossibility result saying that there is no 
strategyproof, robust proportional (which is a much stronger notion of proportionality; see \citep{aziz14}), and non-wasteful mechanism \citep[Theorem 7]{aziz14}. Below, we 
show a stronger impossibility which says that there is no deterministic and non-wasteful mechanism that is even $\epsilon_1$-strategyproof and $\epsilon_2$-proportional for 
$\epsilon_1, \epsilon_2$ such that $0\leq 3\epsilon_1 + \epsilon_2 < \frac{1}{n}$.  

\begin{theorem} \label{nwImp}
 For any $n \geq 2$ and $0 \leq \epsilon_1, \epsilon_2 < \frac{1}{n}$ such that $3\epsilon_1 + \epsilon_2 < \frac{1}{n}$, there is no deterministic and non-wasteful 
mechanism for $n$ agents with piecewise constant valuations that is $\epsilon_1$-strategyproof and $\epsilon_2$-proportional.
\end{theorem}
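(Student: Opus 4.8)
The plan is to exhibit a small, explicit family of piecewise‑uniform profiles on which non‑wastefulness leaves the mechanism essentially no freedom, and then play $\epsilon_1$‑strategyproofness off against $\epsilon_2$‑proportionality. First I would isolate a two‑agent core. For $n\ge 3$ one can either argue directly with all $n$ agents or reduce to two active agents by introducing $n-2$ auxiliary agents whose valuations are uniform on pairwise‑disjoint subintervals of $[0,1]$ that the first two agents value at $0$; non‑wastefulness then forces each auxiliary agent to receive exactly its own subinterval (hence to be proportional and to have no beneficial deviation), and the instance collapses to two agents who must each be guaranteed value at least $\tau:=\tfrac1n-\epsilon_2$ on the remaining subinterval. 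So it suffices to treat two agents with a "proportional threshold" $\tau<\tfrac12$, and the $\tfrac1n$ in the statement is exactly this $\tau+\epsilon_2$.

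For the two‑agent core I would fix agent $2$'s report to be a specific piecewise‑uniform valuation $v_2$ whose support meets agent $1$'s "territory" in one (or a few) narrow, carefully sized overlap region(s), and I would give agent $1$ three candidate true types $u_1,u_2,u_3$ — piecewise‑uniform valuations whose supports are cyclic shifts of one another, so that consecutive supports share a small overlap but $u_i$ puts only a little value on the part of $u_{i+1}$'s support that lies outside $u_i$'s. On each profile $(u_i,v_2)$, non‑wastefulness fixes the allocation of the non‑overlap ("private") regions, and $\epsilon_2$‑proportionality applied to \emph{both} agents at once should constrain agent $1$'s allocation $A_i:=\mathcal M_1(u_i,v_2)$ to sit essentially inside a predetermined region — upper‑bounding $V_{u_i}(A_i)$ (around $\tau$ plus a controlled residual) while forcing $A_{i+1}$ to overlap $u_i$'s support enough to lower‑bound the cross term $V_{u_i}(A_{i+1})$.

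The contradiction then comes from walking around the cycle. Three applications of $\epsilon_1$‑strategyproofness give $V_{u_i}(A_i)\ge V_{u_i}(A_{i+1})-\epsilon_1$ for $i=1,2,3$ (indices mod $3$), hence $\sum_i\bigl(V_{u_i}(A_i)-V_{u_i}(A_{i+1})\bigr)\ge-3\epsilon_1$; on the other hand the rigid structure from the previous step (the upper bounds on $V_{u_i}(A_i)$, together with $\sum_i V_{u_i}(A_i)\ge 3\tau$ from proportionality, and the forced lower bounds on $V_{u_i}(A_{i+1})$) should make $\sum_i\bigl(V_{u_i}(A_i)-V_{u_i}(A_{i+1})\bigr)<-3\epsilon_1$ precisely when $3\epsilon_1+\epsilon_2<\tfrac1n$. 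These two inequalities cannot both hold, which is the desired contradiction; the case $\epsilon_1=0$ (exact strategyproofness, any $\epsilon_2<\tfrac1n$) is the substantive one and already strengthens \citet[Theorem 7]{aziz14}, with the factor $3$ simply recording the three uses of $\epsilon_1$‑strategyproofness and $\epsilon_2$ the single use of proportionality.

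The hard part will be the rigidity the construction needs. A non‑wasteful mechanism still chooses how to split each overlap region, and a careless choice of supports and densities lets it escape the cycle inequality — for instance by always handing agent $i$ its guaranteed share inside the part of $u_i$'s support that $u_{i-1}$ does not value, which would make every cross term $V_{u_{i-1}}(A_i)$ negligible and leave the cycle inequality comfortably satisfiable. The delicate engineering is to size the overlap regions and the densities so that proportionality for both agents on each profile genuinely forces the split of every overlap region (killing this escape), to arrange this uniformly for all $n\ge 2$ while keeping $\tau=\tfrac1n-\epsilon_2$ admissible, and to track the residual terms tightly enough that exactly "$3\epsilon_1+\epsilon_2<\tfrac1n$", and nothing weaker, falls out.
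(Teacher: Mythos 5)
Your high‑level plan — reduce to two active agents with the remaining $n-2$ agents parked on disjoint intervals, then play $\epsilon_1$‑strategyproofness against $\epsilon_2$‑proportionality — is in the same spirit as the paper's (the paper uses a single shared interval $(\tfrac{2}{n},1]$ for the auxiliaries rather than pairwise‑disjoint ones, but both reductions work). The two‑agent core, however, is where you and the paper genuinely part ways, and it is precisely the part you have not carried out.

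The paper's core argument is a \emph{chain} of three profiles, not a cycle, and it uses \emph{both} active agents' strategyproofness exactly once each: (i) starting from the symmetric profile $(u,u)$ on $[0,\tfrac{2}{n}]$ (WLOG $|A_1|\ge\tfrac1n$), agent $1$ switches to $v$ = uniform on $A_1$, and $\epsilon_1$‑SP for agent $1$ plus non‑wastefulness pins agent $2$'s share to $|A'_2|\le|A_2|+\epsilon_1|A_1|$; (ii) then agent $2$ switches to $w$, which puts a tiny mass $\delta$ on $A_2$ and the remaining $1-\delta$ on $A_1$ — non‑wastefulness forces agent $2$ to receive all of $A_2$ (agent $1$ does not desire it), but $A_2$ is worth only $\delta$ to her, so $\epsilon_2$‑proportionality forces the mechanism to hand her a length‑$\ell$ slice of $A_1$ as well; (iii) agent $2$'s $\epsilon_1$‑SP, evaluated under her true type $u$, then yields a contradiction for $\delta$ small. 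The crucial device is step (ii): the $\delta$‑weighted report that exploits non‑wastefulness to seize $A_2$ for free and proportionality to pry loose part of $A_1$. Your proposal has no analogue of this step. Also, the factor $3$ in $3\epsilon_1+\epsilon_2<\tfrac1n$ is an algebraic artifact — it comes from the $\tfrac{n}{2}\cdot\tfrac1n=\tfrac12$ scaling in $\tfrac12\bigl(\tfrac{\tau-\delta}{1-\delta}-\epsilon_1\bigr)>\epsilon_1$, hence $\tfrac{\tau-\delta}{1-\delta}>3\epsilon_1$ — not from three invocations of strategyproofness, as your cyclic reading suggests.

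More substantively, your proposal as written contains a genuine gap that you yourself flag: the cyclic inequality $\sum_i\bigl(V_{u_i}(A_i)-V_{u_i}(A_{i+1})\bigr)\ge-3\epsilon_1$ is cheap, but the matching strict upper bound requires a construction in which proportionality for agent $2$ simultaneously caps $V_{u_i}(A_i)$ near $\tau$ \emph{and} forces $A_{i+1}$ into the overlap that $u_i$ values. You do not exhibit the $u_i$'s or $v_2$, and the escape you name (the mechanism serving agent $1$ its guaranteed share in the part of $u_i$'s support that $u_{i-1}$ does not value) is exactly why a naive shift construction fails — nothing in non‑wastefulness or proportionality dictates \emph{which} sub‑region of agent $1$'s support she is given, so the cross terms $V_{u_i}(A_{i+1})$ can be driven to zero. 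Whether a cyclic construction can be engineered to close this escape and still hit the sharp threshold $3\epsilon_1+\epsilon_2<\tfrac1n$ is, at best, an open question in your write‑up; the paper sidesteps it entirely by having the \emph{other} agent misreport, so that non‑wastefulness (which is about who desires a piece, not how a shared piece is split) can be brought to bear directly.
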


\begin{proof}
Let us assume that there exists a deterministic, $\epsilon_1$-strategyproof, $\epsilon_2$-proportional, and non-wasteful mechanism $\mathcal{M}$ for $n$ agents. First, 
consider the profile $(u, u, y, \cdots, y)$, where
\begin{equation*}
 \begin{aligned}[c]
u(x) = \begin{cases} \frac{n}{2} & x \in [0, \frac{2}{n}]\\ 0 & x \in (\frac{2}{n}, 1] \end{cases}
 \end{aligned}
\qquad
  \begin{aligned}[c]
y(x) = \begin{cases} 0 & x \in [0, \frac{2}{n}]\\ \frac{n}{n-2} & x \in (\frac{2}{n}, 1]. \end{cases}
 \end{aligned}
\end{equation*}
Let $\mathcal{M}$ make an allocation $A_1$, $A_2$ to agent 1 and agent 2, respectively. Without loss of generality, we can assume that $|A_1| \geq |A_2|$. Now, since 
$\mathcal{M}$ is non-wasteful, $|A_1| + |A_2| = \frac{2}{n}$, and so this implies that $|A_1| \geq \frac{1}{n}$ (since both cannot be less than $\frac{1}{n}$). 

Next, let us consider the profile $(v, u, y, \cdots, y)$, where
\begin{equation*}
 \begin{aligned}[c]
v(x) = \begin{cases} 
      \frac{1}{|A_1|} & x \in A_1 \\
      0 & \text{otherwise.}
   \end{cases}
\end{aligned}
\end{equation*}

Since $\mathcal{M}$ is non-wasteful and $\epsilon_1$-strategyproof, agent 1 has to get a utility of at least $1-\epsilon_1$ from the allocation in this profile, because if it 
allocates anything less, then agent 1 will deviate to Profile 1. Therefore, if $\mathcal{M}$ allocates $A'_1$ and $A'_2$ to agent 1 and 2, respectively, then 
${(1-\epsilon_1)} \cdot |A_1| \leq |A'_1|$ and so $ |A'_2| \leq |A_2| + \epsilon_1 \cdot |A_1|$.

Finally, consider the profile $(v, w, y, \cdots, y)$, where 
\begin{equation*}
\begin{aligned}[c]
 w(x) = \begin{cases} 
      \frac{1 - \delta}{|A_1|} & x \in A_1 \\
      \frac{\delta}{|A_2|} & x \in A_2\\
      0  & \text{otherwise}
   \end{cases}
  \end{aligned}
\end{equation*}

Now, here, since $\mathcal{M}$ is non-wasteful and because agent 1 does not desire any part of $A_2$, agent 2 gets at least $A_2$. However, agent 2 has a total utility of 
only 
$\delta$ for $A_2$, and since we assumed that $\mathcal{M}$ is also $\epsilon_2$-proportional, it has to get a piece of length, say, $\ell$, of $A_1$, where
\begin{align*}
 &\delta + \frac{\ell}{|A_1|}(1-\delta) \geq \frac{1}{n} - \epsilon_2\\
 &\implies \ell \geq |A_1| \cdot \frac{(\frac{1}{n} - \epsilon_2 - \delta)}{(1-\delta)}.
\end{align*}

If $\mathcal{M}$ allocates $A''_1$ and $A''_2$ to agent 1 and 2, respectively, in this profile, then $|A''_2| \geq |A_2| + \ell \geq |A_2| + |A_1| \cdot 
\frac{(\frac{1}{n} - \epsilon_2 - \delta)}{(1-\delta)}$. Also, from the profile $(v, u, y, \cdots, y)$ we know that $|A'_2| \leq |A_2| + \epsilon_1 \cdot |A_1|$. Therefore, 
if $\frac{n}{2} \left( |A_2| + |A_1| \cdot \frac{(\frac{1}{n} - \epsilon_2 - \delta)}{(1-\delta)}\right) > \frac{n}{2}\left( |A_2| + \epsilon_1 \cdot |A_1| \right) + 
\epsilon_1$, then agent 2 can deviate from $(v, u, y, \cdots, y)$ to $(v, w, y, \cdots, y)$ and as a result gain strictly more than $\epsilon_1$. So, considering this 
inequality, we have,
\begin{align*}
\frac{n}{2} \left(|A_2| + |A_1| \cdot \frac{(\frac{1}{n} - \epsilon_2 - \delta)}{(1-\delta)}\right) &> \frac{n}{2}\left( |A_2| + \epsilon_1 \cdot |A_1| \right) + \epsilon_1 \\
|A_2| + |A_1| \cdot \frac{(\frac{1}{n} - \epsilon_2 -  \delta)}{(1-\delta)} &> |A_2| + \epsilon_1 \cdot |A_1| + \frac{2}{n} \epsilon_1 \\
\frac{1}{n} \cdot \left(\frac{(\frac{1}{n} - \epsilon_2 - \delta)}{(1-\delta)} - \epsilon_1 \right) &> \frac{2}{n} \epsilon_1  && (\text{as $|A_1| \geq \frac{1}{n}$})\\
\delta &< \frac{1- n(3\epsilon_1 + \epsilon_2)}{n(1 - 3\epsilon_1)}.
\end{align*}

This in turn implies that for every $0 \leq \epsilon_1, \epsilon_2 < \frac{1}{n}$ such that $3\epsilon_1 + \epsilon_2 < \frac{1}{n}$, we can always find a $0 < \delta < 
\frac{1- n(3\epsilon_1 + \epsilon_2)}{n(1 - 3\epsilon_1)}$ to fit in the definition of the function $w(x)$ such that the inequality---and hence our theorem---will be true. 
\end{proof}

With respect to the above theorem, we highlight the following corollaries which are basically special cases when $\epsilon_1$ and $\epsilon_2$ are zero, respectively. The 
first one rules out the possibility of guaranteeing some value (however small) to all the agents if the mechanism has to be non-wasteful and strategyproof. The second 
one, on the other hand, provides a lower bound for the natural question on whether we can come up with mechanisms that can guarantee proportionality and non-wastefulness and 
at the same time allow only a maximum gain of $\epsilon$, where $0 \leq \epsilon < (1 - \frac{1}{n})$, to an agent from misreporting (note that we require $\epsilon < (1 - 
\frac{1}{n})$ since in any proportional mechanism the maximum an agent can gain by misreporting is equal to $(1 - \frac{1}{n})$). Unfortunately, we do not have an accompanying 
upper bound result for this and so the question of an upper bound remains open.

\begin{corollary}
For any $n \geq 2$ and $0 \leq \epsilon< \frac{1}{n}$, there is no deterministic, strategyproof, $\epsilon$-proportional, and non-wasteful 
mechanism for $n$ agents with piecewise constant valuations.
\end{corollary}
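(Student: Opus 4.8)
The plan is to obtain this as an immediate specialization of Theorem~\ref{nwImp} rather than to reprove anything from scratch. A strategyproof mechanism is by definition $0$-strategyproof, and an $\epsilon$-proportional mechanism is $\epsilon$-proportional, so any deterministic, strategyproof, $\epsilon$-proportional, non-wasteful mechanism is in particular a deterministic, non-wasteful, $\epsilon_1$-strategyproof, $\epsilon_2$-proportional mechanism with $\epsilon_1 = 0$ and $\epsilon_2 = \epsilon$. The whole proof is then to check that this choice of parameters is admissible for Theorem~\ref{nwImp} and to quote the theorem.

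Concretely, I would verify the hypotheses of Theorem~\ref{nwImp} for $(\epsilon_1,\epsilon_2) = (0,\epsilon)$: we need $0 \le \epsilon_1,\epsilon_2 < \frac1n$, where the bound on $\epsilon_1$ is trivial for every $n \ge 2$ and the bound on $\epsilon_2$ is exactly the standing assumption $0 \le \epsilon < \frac1n$; and we need $3\epsilon_1 + \epsilon_2 < \frac1n$, which here reads $\epsilon < \frac1n$, again the given assumption. Hence Theorem~\ref{nwImp} applies and rules out the existence of such a mechanism, which is the statement of the corollary.

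I do not expect any real obstacle: the only content is confirming that the parameter ranges line up, and that is immediate. If a self-contained argument were preferred, one could instead re-run the three-profile construction from the proof of Theorem~\ref{nwImp} with $\epsilon_1 = 0$, in which case it simplifies — the first deviation argument forces agent~$1$ to receive full value $1$ in the profile $(v,u,y,\dots,y)$, so $|A'_2| \le |A_2|$, while $\epsilon$-proportionality in $(v,w,y,\dots,y)$ forces agent~$2$ to receive a piece of $A_1$ of length at least $|A_1|\cdot\frac{(\frac1n - \epsilon - \delta)}{1-\delta}$, and for any $\delta < \frac{1-n\epsilon}{n}$ this is a strictly profitable deviation under $0$-strategyproofness, a contradiction. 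But the one-line reduction to Theorem~\ref{nwImp} is cleaner and is what I would present.
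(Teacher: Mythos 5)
Your proposal is correct and matches the paper exactly: the paper presents this corollary precisely as the specialization of Theorem~\ref{nwImp} to $\epsilon_1 = 0$, $\epsilon_2 = \epsilon$, and your parameter check ($0 \le \epsilon_1, \epsilon_2 < \frac1n$ and $3\epsilon_1 + \epsilon_2 = \epsilon < \frac1n$) is the entire content needed.
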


\begin{corollary}
For any $n \geq 2$ and $0 \leq \epsilon < \frac{1}{3n}$, there is no deterministic, $\epsilon$-strategyproof, proportional, and non-wasteful 
mechanism for $n$ agents with piecewise constant valuations.
\end{corollary}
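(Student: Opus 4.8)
The plan is to derive this corollary as a direct specialization of Theorem~\ref{nwImp}, so there is essentially nothing new to prove. Recall the convention fixed in Section~\ref{prelims} that ``proportional'' means ``$0$-proportional.'' Hence a deterministic, $\epsilon$-strategyproof, proportional, and non-wasteful mechanism for $n$ agents with piecewise constant valuations is exactly a deterministic, non-wasteful mechanism that is $\epsilon_1$-strategyproof and $\epsilon_2$-proportional with $\epsilon_1 = \epsilon$ and $\epsilon_2 = 0$.

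First I would check that the hypotheses of Theorem~\ref{nwImp} are satisfied for this choice of parameters. We need $0 \le \epsilon_1, \epsilon_2 < \frac{1}{n}$ and $3\epsilon_1 + \epsilon_2 < \frac{1}{n}$. With $\epsilon_2 = 0$ the latter reads $3\epsilon < \frac{1}{n}$, which is precisely the hypothesis $\epsilon < \frac{1}{3n}$ of the corollary; and then $\epsilon_1 = \epsilon < \frac{1}{3n} < \frac{1}{n}$ while $\epsilon_2 = 0 < \frac{1}{n}$, so the remaining conditions hold automatically. Theorem~\ref{nwImp} therefore applies and rules out such a mechanism, which is the claim. (In particular, taking $\epsilon = 0$ recovers the statement that no deterministic, strategyproof, proportional, and non-wasteful mechanism exists for $n$ agents with piecewise constant valuations.)

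If one preferred a self-contained argument, it would be the same three-profile construction used in the proof of Theorem~\ref{nwImp} with $\epsilon_2 = 0$ throughout: in the profile $(v, w, y, \dots, y)$, plain proportionality forces agent~$2$ to receive a sub-piece of $A_1$ of length at least $|A_1|\cdot\frac{\frac{1}{n}-\delta}{1-\delta}$, and comparing the resulting utility gain against the $\epsilon$-strategyproofness bound obtained from the profile $(v,u,y,\dots,y)$ yields a contradiction whenever $\delta < \frac{1-3n\epsilon}{n(1-3\epsilon)}$; this bound is strictly positive exactly when $\epsilon < \frac{1}{3n}$, so a valid $\delta > 0$ can always be chosen. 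The only point requiring any care is this last bookkeeping step---that the strict inequality $\epsilon < \frac{1}{3n}$ (rather than $\le$) is what guarantees a positive admissible $\delta$---but it is immediate, so there is no real obstacle.
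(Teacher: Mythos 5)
Your proof is correct and is exactly the paper's intended derivation: the paper presents this corollary as the special case of Theorem~\ref{nwImp} obtained by setting $\epsilon_2 = 0$ and $\epsilon_1 = \epsilon$, and your parameter check (that $3\epsilon < \frac{1}{n}$ is equivalent to $\epsilon < \frac{1}{3n}$, with the remaining bounds following automatically) is precisely the bookkeeping required.
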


\section{Removing non-wastefulness} ~\label{wm}
While non-wastefulness is certainly a desirable property (more so when, as mentioned in Section~\ref{nwm}, due to a result by \cite{sch96}, we know that Pareto-efficiency 
along with strategyproofness and fairness cannot be achieved), it seems like the negative results in the previous section were largely driven by this constraint as it severely 
restricts the kind of allocations that a mechanism can make. So, what if we remove this constraint? Do similar impossibilities still hold, or are there mechanisms that satisfy 
some of those properties? These are the questions we try to answer in this section.

\citet{chen10,chen13} provided a deterministic mechanism for piecewise uniform valuations and one of the main open questions they had posed was on generalizing their 
mechanism for piecewise constant valuations. Subsequently, \citet{aziz14} and \citet{branzei15} had posed the same question regarding the existence of mechanisms that are 
strategyproof and proportional for piecewise constant valuations. Here we address a special case of this question when the mechanism makes contiguous 
allocations and show that there exists no deterministic direct-revelation mechanism that is strategyproof and $\epsilon$-proportional, for any $0 \leq \epsilon < 
\frac{1}{n}$. Additionally, we also prove a stronger result (Proposition~\ref{prop2agents}) for the 
case of two agents. Informally, at a high-level, the proofs of both of these results follow the same theme as in the proof of Theorem~\ref{nwImp} where we construct valuation 
profiles with an aim of arriving at a contradiction. However, unlike in Theorem~\ref{nwImp}, we now no longer have the non-wasteful constraint, and so this requires some 
additional arguments which use the fact that the allocations are contiguous. 

\begin{theorem} \label{SP-prop-cont}
 For any $n \geq 2$ and $0 \leq \epsilon < \frac{1}{n}$, there is no deterministic mechanism for $n$ agents with piecewise constant valuations that makes contiguous 
allocations and is strategyproof and $\epsilon$-proportional.
\end{theorem}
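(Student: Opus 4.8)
The plan is to argue by contradiction along the three‑profile lines of the proof of Theorem~\ref{nwImp}, with the work done there by the non‑wasteful assumption now done by contiguity together with a few auxiliary agents. So assume $\mathcal{M}$ is deterministic, makes contiguous allocations, and is strategyproof and $\epsilon$-proportional for some $0 \le \epsilon < \frac{1}{n}$.

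First I would consider the profile in which agents $1$ and $2$ have the valuation $u$ that is uniform with density $\frac{n}{2}$ on $[0,\frac{2}{n}]$ and zero elsewhere, while each of the remaining $n-2$ agents values (uniformly) only its own tiny dedicated sub-interval inside $(\frac{2}{n},\,\frac{2}{n}+s]$, for a small $s$ fixed at the end; for $n=2$ there are no such dummies and the ``active region'' is all of $[0,1]$. Since the allocation is an ordered family of intervals and every desired piece must be allocated, the dummies are forced to receive (essentially) their own sub-intervals, so that agents $1$ and $2$ between them tile $[0,\frac{2}{n}]$ with two intervals; after relabelling, agent $1$ gets the left one, $P_1=[0,t]$, and agent $2$ gets $P_2=[t,\frac{2}{n}]$, where $\epsilon$-proportionality of both forces $0<t<\frac{2}{n}$. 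Getting agents $1$ and $2$ pinned into a clean partition of $[0,\frac{2}{n}]$ is exactly where contiguity plus the dummies take over the role of non-wastefulness, and the only care needed is to take $s$ small enough that no dummy can be displaced.

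Next I would switch agent $1$'s report to the valuation $v$ that is uniform on $P_1$ and zero elsewhere, leaving agents $2,\dots,n$ unchanged. Applying strategyproofness to agent $1$ in both directions --- once with true valuation $u$, once with true valuation $v$ --- together with contiguity and the proportionality constraints already in force pins her allocation to exactly $P_1$, so agent $2$ again receives exactly $P_2$. Finally I would change only agent $2$'s report to $w$, a narrow spike of weight $1-\delta$ placed just inside $P_1$ against the cut point $t$ (uniform on $[t-\eta,t]$) together with a $\delta$-sliver spread uniformly over $P_2$; agent $2$'s true valuation is still $u$, so this is a unilateral misreport, and contiguity forces agents $1$ and $2$ to partition $[0,\frac{2}{n}]$ into two intervals. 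If agent $2$ still gets the right piece $[q,\frac{2}{n}]$, then $\epsilon$-proportionality of agent $2$ rules out $q\ge t$ (there her value under $w$ would be at most $\delta<\frac1n-\epsilon$), so $q<t$ and her piece is strictly longer --- hence strictly more valuable under her true valuation $u$ --- than $P_2$, contradicting strategyproofness. If instead agent $2$ gets the left piece $[0,q]$, then to capture the spike she must push $q$ up to within an arbitrarily small amount of $t$ (as $\eta\to0$), so agent $1$ is left with essentially only $P_2=[t,\frac{2}{n}]$, which her reported valuation $v$ values at essentially $0<\frac1n-\epsilon$, contradicting $\epsilon$-proportionality of agent $1$. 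Choosing $\delta\in(0,\frac1n-\epsilon)$ and then $\eta>0$ small makes all these strict inequalities hold at once.

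The step I expect to be the crux --- and the genuinely new ingredient over Theorem~\ref{nwImp} --- is this last case analysis, because dropping non-wastefulness lets the mechanism both flip which of agents $1$ and $2$ sits on which side of the partition and hand a main agent cake she values at zero. Using a perturbation whose mass is a localized spike pressed against the cut point $t$ (rather than spread over all of $P_1$, as in the non-wasteful proof) is precisely what makes both of these moves backfire: displacing agent $2$ from $P_2$ either costs $\mathcal{M}$ strategyproofness or costs agent $1$ her proportional share. For $n=2$ there are no auxiliary agents, but the rest of the argument is unchanged, so the theorem holds for all $n\ge 2$; a stronger statement for the two-agent case is then proved separately as Proposition~\ref{prop2agents}.
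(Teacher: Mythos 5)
Your three-profile skeleton ($u\to v\to w$, with the misreport arguments at each transition) is the same high-level strategy the paper uses, and you correctly identify that contiguity plus auxiliary agents must do the work that non-wastefulness did in Theorem~\ref{nwImp}. However, there is a genuine gap for $n\ge 3$: placing the dummies to the \emph{right} of $[0,\frac{2}{n}]$ sandwiches agent~2 between agent~1 and the dummies, so neither endpoint of agent~2's piece is pinned. You write her piece in profile~3 as $[q,\frac{2}{n}]$, but without non-wastefulness the first dummy's interval may reach leftward past $\frac{2}{n}$, so her piece is really $[q,\,?]$ with $?$ possibly well below $\frac{2}{n}$. Concretely, take $\delta$ small and suppose the mechanism sets $q\le t-\eta$ and $?=t$: agent~2 captures the whole spike, so her $w$-value is $1-\delta\ge \frac1n-\epsilon$, agent~1 still has $v$-value $q/t\ge \frac1n-\epsilon$ for small $\eta$, every point of $(t,\frac{2}{n}]$ that agent~2 values under the sliver is allocated (to dummies, who also cover their own sub-intervals), and yet agent~2's true $u$-value $\frac{n}{2}(t-q)\le\frac{n}{2}\eta$ is \emph{smaller} than in profile~2. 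No contradiction arises, so your case~(a) does not close. The same indeterminacy already infects profile~2, where you assert agent~2 receives exactly $P_2=[t,\frac{2}{n}]$; only her left endpoint $t$ is forced.

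The paper avoids exactly this problem by making $u$ uniform on all of $[0,1]$ and concentrating the auxiliary agents' valuations ($r$) near the \emph{left}, then arguing via allocation orders that agent~2 must receive the right-end piece $[c,1]$ in all three profiles. Because the right endpoint is pinned at $1$, her $u$-value is the single-parameter quantity $1-c$, and the profile-3 bound $c<c_2$ immediately yields the strategyproofness violation. So your approach can likely be repaired by mirroring the construction (dummies and agent~1 on the left, agent~2 anchored at the right end, $u$ supported up to $1$), but as written it does not establish the contradiction for $n\ge 3$. Your $n=2$ reduction is fine, since there the right endpoint is $1$ automatically.
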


Before we prove this, we prove the following stronger proposition which handles the two agent case. 

\begin{proposition} \label{prop2agents}
 For any $0 \leq \epsilon_1, \epsilon_2 < \frac{1}{2}$ such that $\epsilon_1 + \epsilon_2 < \frac{1}{2}$, there exists no deterministic mechanism for two hungry agents with 
piecewise constant valuations that makes contiguous allocations and is $\epsilon_1$-strategyproof and $\epsilon_2$-proportional.     
\end{proposition}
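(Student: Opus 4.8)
The plan is to follow the three–profile strategy used in the proof of Theorem~\ref{nwImp}, but to let contiguity (together with hunger) do the work that non-wastefulness did there. Suppose for contradiction that $\mathcal M$ is deterministic, makes contiguous allocations, is $\epsilon_1$-strategyproof and $\epsilon_2$-proportional for two hungry agents, with $\epsilon_1+\epsilon_2<\tfrac12$. I would first record the structural facts that make the ensuing case analysis finite: since both agents are hungry and $\mathcal M$ allocates every positively valued piece, the whole of $[0,1]$ is handed out; since the allocation is contiguous and $n=2$, an outcome is completely described by a cut point $t$ together with a bit telling which agent receives $[0,t]$; and $\epsilon_2$-proportionality says that, in each agent's reported measure, the cut lies in the band $[\tfrac12-\epsilon_2,\tfrac12+\epsilon_2]$.

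Profile~1 is $(u,u)$, with $u$ the uniform density. Its outcome cuts $[0,1]$ at some point; after relabelling the agents so that agent~1 gets the larger piece and, if necessary, reflecting the cake through $x\mapsto 1-x$ (both operations preserve all four properties), I may assume agent~1 receives $[0,a]$ with $a\in[\tfrac12,\tfrac12+\epsilon_2]$ and agent~2 receives $[a,1]$. Profile~2 is $(v,u)$, where $v$ is piecewise constant with mass $1-\eta$ spread uniformly over $[0,a]$ and the remaining $\eta$ spread uniformly over $(a,1]$ (so $v$ is hungry); $\eta>0$ is an auxiliary parameter sent to $0$ at the end. Since reporting $u$ would land agent~1 the interval $[0,a]$, worth $1-\eta$ under $v$, $\epsilon_1$-strategyproofness forces agent~1's interval in $(v,u)$ to have $v$-value at least $1-\eta-\epsilon_1$. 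This is where contiguity replaces non-wastefulness: an interval of the form $[c,1]$ can carry that much $v$-value only for $c\le \tfrac{a(\eta+\epsilon_1)}{1-\eta}$, and for small $\eta$ this is $<\tfrac12-\epsilon_2$ (using $\epsilon_1<\tfrac12-\epsilon_2$ and $a\le\tfrac12+\epsilon_2$), which would leave agent~2 — who reported $u$ — with an interval shorter than her proportional share; hence agent~1 must hold a \emph{left} interval $[0,c]$ with $c\ge a\bigl(1-\tfrac{\epsilon_1}{1-\eta}\bigr)$, and agent~2 holds $[c,1]$, of length at most $1-a+\tfrac{a\epsilon_1}{1-\eta}$.

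Profile~3 is $(v,w)$, where $w$ is a hungry piecewise-constant density designed to push agent~2 onto a long interval. The idea is to place almost all of $w$'s mass on a sub-interval $J\subseteq[0,a]$ adjacent to the point $a$: then agent~2's $\epsilon_2$-proportionality forces her interval to reach into $J$, and — combining this with agent~1's $\epsilon_2$-proportionality and the fact that, by contiguity, exactly one agent gets a left interval — one rules out every outcome except ``agent~1 gets a left interval $[0,d]$ with $d$ bounded above in terms of $|J|$, agent~2 gets $[d,1]$''. Choosing $|J|$ close to $a$ then makes agent~2's interval $[d,1]$ in profile~3 longer than her interval in profile~2 by an amount which, once $\eta$ and $|J|$ are taken to their limits, exceeds $\epsilon_1$; since agent~2's true valuation is $u$ and she can pass from the truthful report $u$ in profile~2 to the report $w$ in profile~3 with agent~1's report held fixed at $v$, this contradicts $\epsilon_1$-strategyproofness. (The two-agent case of Theorem~\ref{SP-prop-cont} is the $\epsilon_1=0$ specialization; the general-$n$ version should follow by reserving a small slice of the cake for the extra $n-2$ agents and rerunning the construction on the rest.)

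The step I expect to be the crux is profile~3. In Theorem~\ref{nwImp}, non-wastefulness rigidly pinned each agent to a prescribed region, which is exactly what made the chain of inequalities close; without it the mechanism has genuine freedom, both in where it places the cut among the $\epsilon_2$-proportional allocations and in which agent gets which side, and the obvious choices of $w$ let it ``escape'' — for instance by handing agent~1, whose reported valuation is concentrated and hence cheap to satisfy, a large interval and giving agent~2 only a thin proportional sliver. Ruling out all such escapes — i.e.\ engineering $J$ (equivalently $w$) and carrying out the side-assignment case analysis so that contiguity plus the two proportionality constraints force agent~2's interval to be long — is the delicate part, and it is also where one must be careful to extract the sharp threshold $\epsilon_1+\epsilon_2<\tfrac12$, rather than the weaker $3\epsilon_1+\epsilon_2<\tfrac12$ that a blunt specialization of Theorem~\ref{nwImp} to $n=2$ would give; getting the constant down to $1$ on $\epsilon_1$ likely requires squeezing the bound on agent~2's profile-2 interval and the bound on her profile-3 interval simultaneously (and possibly also exploiting that agent~1's proportional share under the concentrated report $v$ costs her only about $a(\tfrac12-\epsilon_2)$ of length).
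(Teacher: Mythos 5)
Your high-level plan --- three profiles $(u,u)$, $(v,u)$, $(v,w)$, with hunger, contiguity, and $\epsilon_2$-proportionality substituting for the non-wastefulness used in Theorem~\ref{nwImp} --- is indeed the route the paper takes, and your Profile~1 and the RL-elimination argument in Profile~2 are handled essentially as in the paper. But there is a genuine gap at the step you yourself flag as the crux, and it is caused by your choice of $v$: spreading the mass $1-\eta$ \emph{uniformly} over $[0,a]$ is too weak. In Profile~2 it only pins agent~1's cut to $c\ge a\bigl(1-\tfrac{\epsilon_1}{1-\eta}\bigr)$, so agent~2's $u$-value is bounded only by $\approx 1-a+a\epsilon_1$, already leaking an $a\epsilon_1$ term. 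More seriously, in Profile~3 agent~1's $\epsilon_2$-proportionality under this $v$ forces her left interval $[0,d]$ to have $d\gtrsim a\bigl(\tfrac12-\epsilon_2\bigr)$, so agent~2's piece can never reach far left; and if you try to compensate by putting nearly all of $w$ on a long block $J\subseteq[0,a]$ adjacent to $a$, an RL-type allocation becomes feasible --- any cut $d$ with $a-|J|\bigl(\tfrac12+\epsilon_2\bigr)\le d\le a\bigl(\tfrac12+\epsilon_2\bigr)$ satisfies both agents' $\epsilon_2$-proportionality --- and under it agent~2 (true valuation $u$) receives only $[0,d]$ of value $d\approx a\bigl(\tfrac12-\epsilon_2\bigr)$, which is below her Profile~2 value, so no contradiction arises. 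Conversely, keeping $|J|$ small enough to forbid RL caps the gain at roughly $a\cdot\frac{(\frac12-\epsilon_2)^2}{\frac12+\epsilon_2}-a\epsilon_1$, which exceeds $\epsilon_1$ only for $\epsilon_1$ below about $\tfrac16$ when $\epsilon_2=0$ --- far short of the claimed $\tfrac12$.

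The paper's fix is not a cleverer $w$ but a sharply concentrated $v$ (your closing sentence gestures at a ``concentrated report $v$'', but your explicit $v$ is not concentrated, and that mismatch is where the argument fails). The paper's $v$ places a tall spike carrying mass $\tfrac12+\epsilon_2$ on a tiny interval $[0,\delta_1]$, then a plateau of mass $\tfrac12-\epsilon_1-\epsilon_2-\delta_2$ up to $c_1-\delta_3$, then a block of exactly $\epsilon_1$ on $[c_1-\delta_3,c_1]$, and $\delta_2>0$ on $[c_1,1]$; the positivity of the plateau and of $\delta_2$ is precisely where $\epsilon_1+\epsilon_2<\tfrac12$ enters. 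This $v$ does three things simultaneously: the spike rules out RL in Profile~2 (any right piece for agent~1 meeting her SP constraint starts at $c<\delta_1$, leaving agent~2 with $u$-value under $\tfrac12-\epsilon_2$); the $\epsilon_1$ block lets agent~1 absorb exactly the SP slack, so her cut need only reach $c_1-\delta_3$ and agent~2's Profile~2 value is $\le 1-(c_1-\delta_3)$ with no $a\epsilon_1$ term; and in Profile~3 the spike means agent~1's $\epsilon_2$-proportionality is met by an interval of length $\approx\delta_1$, so the cut may be pushed to $\delta_5$, arbitrarily small, while RL is again ruled out because a right piece for agent~1 would force $c\le\delta_1<\delta_4$, giving agent~2 $w$-value below $\tfrac12-\epsilon_2$. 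Replacing your uniform $v$ with a spiked $v$ of this shape --- and choosing $w$ with its own small spike near $0$ calibrated against $\delta_1$ --- is the missing idea.
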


\begin{proof}
  Let us assume that there exists such a mechanism $\mathcal{M}$. Now, consider the preference profile $(u, u)$ of the agents, where both agents 1 and 2 have a valuation 
function $u$ that is uniform over $[0, 1]$, and let $\mathcal{M}$ make an allocation $(A_1, A_2)$, where $A_i$ is the cake allocated to agent $i$, in this profile. Since 
$\mathcal{M}$ outputs contiguous 
allocations, there is a point $c_1$ where the cut is made. As a result, we have a left piece ($[0, c_1]$) and a right piece ($[c_1, 1]$), and without loss of generality let 
us assume that agent 1 gets the left piece in the profile $P_1 = (u, u)$. Throughout, we refer to an allocation as being of LR-type (and correspondingly of being
RL-type) if agent 1 gets the left (right) piece (for some cut point $c$) and agent 2 gets the right (left) piece.  Since $\mathcal{M}$ is $\epsilon_2$-proportional and 
$\epsilon_2 < \frac{1}{2}$, $c_1 > 0$. Also, for the rest of the proof let us assume that $c_1 \geq \frac{1}{2}$ (we can make a similar construction as below if $c_1 < 
\frac{1}{2}$). 

Next, consider the profile $P_2 = (v, u)$, where $v$ is defined as follows.
\begin{equation*}
 \begin{aligned}[c]
v(x) = \begin{cases}       
       \frac{\frac{1}{2} + \epsilon_2}{\delta_1} & x \in [0, \delta_1] \\[1ex]
       \frac{\frac{1}{2} - \epsilon_1 - \epsilon_2 - \delta_2}{c_1 - \delta_1 - \delta_3} & x \in [\delta_1, c_1 - \delta_3] \\[1ex]
       \frac{\epsilon_1}{\delta_3} & x \in [c_1 - \delta_3, c_1] \\[1ex]
       \frac{\delta_2}{1 - c_1} & x \in [c_1, 1] 
   \end{cases}
\end{aligned}
\end{equation*} 
where $0 < \delta_2 < \frac{1}{2} - \epsilon_1 - \epsilon_2$, $0 < \delta_3 < c_1 - \epsilon_1$, and $0 < \delta_1 < 
\min\{c_1 - \epsilon_1 - \delta_3, \, \frac{1}{2} - \epsilon_2\}$ . 

Since $\mathcal{M}$ is $\epsilon_1$-strategyproof, it has to allocate agent 1 a value of at least $(1 - \epsilon_1 - \delta_2)$, for otherwise agent 1 can deviate to 
$P_1$ 
and as a result gain more than $\epsilon_1$. Now, since $\delta_1 < \frac{1}{2} - \epsilon_2$ and $\delta_2 < \frac{1}{2} - \epsilon_1 - \epsilon_2$, it can seen that an 
RL-type allocation is not possible. Hence $\mathcal{M}$ outputs an LR-type allocation with a cut point at, say, $c_2$, and since we know that agent 1 has to get a value of at 
least $1 - \epsilon_1 - \delta_2$, we have that $c_2 \geq c_1 - \delta_3$. This in turn implies that agent 2 gets at most $[c_1 - \delta_3, 1]$. 

Finally, consider $P_3 = (v, w)$, where $w$ is defined as follows. 
\begin{equation*}
 \begin{aligned}[c]
w(x) = \begin{cases}       
       \frac{\frac{1}{2} - \epsilon_2}{\delta_4} & x \in [0, \delta_4] \\[1ex]
       \frac{2\epsilon_2}{\delta_4 - \delta_5} & x \in [\delta_4, \delta_5] \\[1ex]
       \frac{\delta_4}{c_1 - \delta_3 - \delta_5} & x \in [\delta_5, c_1 - \delta_3] \\[1ex]
       \frac{\frac{1}{2} - \epsilon_2 - \delta_4}{1 - c_1 + \delta_3} & x \in [c_1 - \delta_3, 1]
   \end{cases}
\end{aligned}
\end{equation*} 
where $\delta_1 < \delta_4 < \delta_5 < c_1 - \epsilon_1 - \delta_3$. 

Now, here, since $\mathcal{M}$ is $\epsilon_2$-proportional and $\delta_4 > \delta_1$, it can be seen that an RL-type allocation is not possible. Therefore, 
$\mathcal{M}$ makes an LR-type allocation and since it is $\epsilon_2$-strategyproof and the whole cake needs to be allocated (since the agents are hungry and so every part 
of the cake is desired by them), agent 2 gets at least $[\delta_5, 1]$. But 
then, this implies that agent 2 in Profile 2 can deviate to Profile 3 and as a result gain a value of $c_1 - \delta_3 - \delta_5$ from 
it. And since $c_1 - \delta_3 - \delta_5 > \epsilon_1$, this contradicts our assumption that $\mathcal{M}$ is $\epsilon_1$-strategyproof.  
\end{proof}

As an aside, note that the $\epsilon_2 = 0$ case in Proposition~\ref{prop2agents} essentially says that if we insist on contiguous and proportional allocations, then one 
cannot provide any guarantee on strategyproofness (i.e., such mechanisms cannot be $\epsilon$-strategyproof for any feasible $\epsilon$). Later on, in 
Theorem~\ref{ep-SP-gen}, we show how one can circumvent this and obtain a $\frac{1}{4}$-strategyproof and proportional mechanism for two agents by giving each of them at most 
two contiguous pieces.   

Having proved the two agent case, next we look at the case when there are three or more agents. 

\begin{proof}[Proof of Theorem~\ref{SP-prop-cont}]
We start off by introducing the following observations and terminologies which we will use throughout this proof. 
\begin{itemize}
\item We say that a contiguous piece of cake $X = [a_1, b_1]$ is to the \textit{left} of another piece $Y = [a_2, b_2]$ if $a_1 < b_1 \leq a_2 < b_2$. 

\item Since we are interested in allocations $(A_1, \cdots, A_n)$ that are contiguous, there exists an underlying allocation order, say, $O$, for the agents in $[n]$. That is, 
we can obtain an order, say, $O = p_1, \cdots, p_n$, where $p_i \in [n]$ and agent $p_i$ gets a 
piece of cake that is to left of the piece allocated to agent $p_{i+1}$.

\item In all the valuation profiles defined below, every part of the cake is desired by at least one agent. And so, since we assume that a mechanism allocates all the pieces 
of cake that are valued at greater than zero by at least one of the agents, therefore in all the cases below we will have one agent who gets the piece $[c, 1]$ for some $c > 
0$. We refer to piece that contains the right endpoint of the cake (which is 1) as the \textit{right-end}. Therefore, when we say that an agent $i$ gets the right-end, this 
implies that agent $i$ is the final agent in the underlying allocation order and that she gets a piece of cake that has the right endpoint of the cake. Correspondingly, we 
say that an agent gets a \textit{left-end} if she gets a piece $[0, c]$ for some $c > 0$ (i.e., if she gets piece that has the left endpoint).  
\end{itemize}  

Equipped with the above, let us assume for the sake of contradiction that there exists a mechanism $\mathcal{M}$ that is deterministic, strategyproof, and 
$\epsilon$-proportional for $n \geq 3$ agents (the $n=2$ case has already been proved in Proposition~\ref{prop2agents}). Next, let us consider the profile $(u, u, r, \cdots, 
r)$, where $u$ is the uniform function on $[0, 1]$ and
\begin{equation*}
  \begin{aligned}[c]
r(x) = \begin{cases} \frac{n}{1-n\epsilon} & x \in \left[(\frac{1}{n} - \epsilon)^2, (\frac{1}{n} - \epsilon)^2 + \frac{1}{n} - \epsilon\right] \\ 0 & \text{otherwise}. 
\end{cases}
 \end{aligned}
\end{equation*}

For the profile under consideration, let $\mathcal{M}$ make an allocation $(A_1, \cdots, A_n)$ and let the underlying allocation order be $O$. First thing to notice is that 
the allocation order here can take only the following forms: $\{X, 2, 1\}$ or $\{X, 1, 2\}$, where $X$ is some permutation of $\{3, \cdots, n\}$. This is so because of the 
following reasons: first is that neither agent $1$ nor agent $2$ can be in between the agents in $X$ (i.e., the agents in $\{3, \cdots, n\}$ have to be consecutive (in some 
order) in $O$) as this would result in one of those agents not getting an $\epsilon$-proportional allocation (this is true because all these agents only value the interval 
$\left[(\frac{1}{n} - \epsilon)^2, (\frac{1}{n} - \epsilon)^2 + \frac{1}{n} - \epsilon\right]$ and so if agent 1 or 2 are in between, then since their allocation needs to be 
of 
length at least $(\frac{1}{n} - \epsilon)$, agents that appear to the right of them in $O$ will not receive an $\epsilon$-proportional allocation); the second reason is that 
none of the agents in $X$ can get the right-end as this would result in a cake of length strictly less than $(\frac{2}{n} - 2\epsilon)$ to allocate to agent 1 and agent 2; 
finally, agent 1 or 2 cannot get the left-end since this would again result in one of the agents in $X$ not getting an $\epsilon$-proportional allocation. Hence, the above 
arguments imply that either agent 1 or agent 2 receives the right-end in this profile. For the rest of the proof, we assume, without loss of generality, that agent 2 gets the 
right-end (we can make a similar construction if instead 
agent 1 is the one who gets it).  

Next, let us consider the profile $(v, u, r, \cdots, r)$, where
\begin{equation*}
  \begin{aligned}[c]
  v(x) = \begin{cases} \frac{1}{|A_1| } & x \in A_1 \\ 0 & \text{otherwise}. \end{cases}
 \end{aligned}
\end{equation*}

Since the mechanism is strategyproof, agent 1 has to get the allocation $A_1$ in this profile, because if otherwise, then she can deviate to profile 1. Additionally, we claim 
that agent 2 gets the right-end even in this profile. This is because of the following reasons: first, again, for the reasons stated previously, none of the agents in $X$ can 
get the right-end; second, agent 1 cannot get the right-end because this would entail that agent 1 gets a piece $A_1'$ such that $|A'_1| > |A_1|$ (since $A_1$ was not in the 
end in profile 1), which in turn will imply that agent 1 in profile 1 can deviate to profile 2. Hence, agent 1 receives $A_1 = [c_1, c_2]$, where $c_2 > (\frac{1}{n} - 
\epsilon)^2 + \frac{1}{n} - \epsilon$ (as $|A_1| \geq \frac{1}{n} - \epsilon$), and agent 2 receives an allocation $A_2' = [c_2, 1]$.  

Finally, consider the profile $(v, w, r, \cdots, r)$, where $0 < \delta < (\frac{1}{n} - \epsilon), b = \max\{c_2 - (\frac{1}{n} - \epsilon)(c_2 - c_1), (\frac{1}{n} - 
\epsilon)^2 + \frac{1}{n} - \epsilon\}$, and  
\begin{equation*}
  \begin{aligned}[c]
  w(x) = \begin{cases} 
  0 & x \in [0, b]\\[1ex]
  \frac{1- \frac{1}{n} + \epsilon + \delta}{c_2 - b } & x \in [b, c_2] \\[1ex]   
  \frac{\frac{1}{n} - \epsilon - \delta}{1 - c_2} & x \in [c_2, 1].
  \end{cases}
 \end{aligned}
\end{equation*}

Now, in the profile under consideration, again, it is easy to see that none of the agents in $X$ can receive the right-end, for this would result in agent 2 (with $w(x)$) 
getting a zero allocation. Also, agent 1 cannot get the right-end because we know that if agent 1 were to receive the right-end, then she would need to receive at least $[b, 
1]$ which in turn would again result in agent 2 getting a zero allocation. Therefore, agent 2 has to receive the right-end here and this in turn implies that he has to 
receive a piece $[c, 1]$, where $c < c_2$. However, this would result in agent 2 deviating from profile 2 to profile 3, thus contradicting the strategyproofness of 
$\mathcal{M}$. 
\end{proof}

Given Theorem~\ref{SP-prop-cont}, the first natural question that arises is if we can extend it to the case when the mechanism can make arbitrary allocations (i.e., when the 
mechanism can output non-contiguous allocations). Unfortunately, we do not have an answer to this question. Instead, below, we show a connection between direct-revelation 
mechanisms and mechanisms in the Robertson-Webb model for the case when agents have piecewise constant valuations. We believe that it will be useful in either of the 
cases---i.e., if the answer to the above question is either in the positive or if we want to prove that such a mechanism does not exist. In the case that such a mechanism 
exists, 
the connection basically shows that when given an upper bound on the maximum number of breakpoints in any agent's valuation function, one can construct a mechanism in the 
Robertson-Webb model that approximately achieves both of the properties. And in the case that one wants to prove that such a mechanism does not exist, one can use this 
connection to prove a non-existence (of mechanisms that are $\epsilon$-strategyproof and $\epsilon$-proportional for some $\epsilon > 0$) in the Robertson-Webb model and then 
map it back to show that there is no (finite) direct-revelation mechanism that is strategyproof and proportional. 

\begin{proposition} \label{approProp}
Let $P_1, P_2, P_3$ denote the property of strategyproofness, proportionality, and envy-freeness, respectively, and let $P \subseteq \{P_1, P_2, P_3\}$. Given $k$, an upper 
bound on the maximum number of breakpoints in any agent's valuation function, and any $n \geq 2$, if there exists a (finite) deterministic direct-revelation mechanism for $n$ 
agents with piecewise constant valuations that satisfies $P_i$, $\forall P_i \in P$, then, for any $\epsilon > 0$, there exists a mechanism in the Robertson-Webb model for $n$ 
agents with piecewise constant valuations that asks at most $n \left\lfloor \frac{2k}{\epsilon} \right\rfloor $ queries on each input and is $\epsilon$-$P_i$, $\forall P_i \in 
P$.
\end{proposition}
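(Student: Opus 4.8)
The plan is to run the given direct-revelation mechanism $\mathcal{M}$ inside the Robertson-Webb model after first using a bounded number of cut queries to \emph{learn} each agent's valuation well enough to feed a faithful proxy to $\mathcal{M}$. Fix $\epsilon>0$; we may assume $\epsilon<\frac{1}{n}$, since otherwise the $\epsilon$-versions of the properties are vacuous or trivially met. Set $m=\left\lfloor\frac{2k}{\epsilon}\right\rfloor+1$. The one numerical fact needed is $\frac{k}{m}<\frac{\epsilon}{2}$, which is just the inequality $\lfloor x\rfloor+1>x$ with $x=\frac{2k}{\epsilon}$ (this also covers the degenerate regime $\frac{2k}{\epsilon}<1$, where $m=1$ and indeed $k<\frac{\epsilon}{2}$).

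First I would build the proxies. For each agent $i$, issue the $m-1$ cut queries $c_{i,j}=\textbf{cut}(i,c_{i,j-1},1/m)$ for $j=1,\dots,m-1$, where $c_{i,0}=0$, and set $c_{i,m}=1$ (no query needed, since normalization gives $V_i(c_{i,m-1},1)=1/m$). These points partition $[0,1]$ into $m$ pieces, each of $v_i$-value exactly $1/m$. Let $\hat v_i$ be the piecewise constant density equal to $\frac{1/m}{c_{i,j}-c_{i,j-1}}$ on $[c_{i,j-1},c_{i,j}]$; this is a normalized piecewise constant valuation (possibly with many breakpoints, but $\mathcal{M}$ is assumed to accept all such profiles). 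The total query count is $n(m-1)=n\left\lfloor\frac{2k}{\epsilon}\right\rfloor$, as required. The Robertson-Webb mechanism then simply outputs $\mathcal{M}(\hat v_1,\dots,\hat v_n)$, making any additional cuts needed at the points $c_{i,j}$ at the end (as permitted by the model).

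The heart of the argument is the bound $|\hat V_i(X)-V_i(X)|\le\frac{k}{m}<\frac{\epsilon}{2}$ for \emph{every} measurable $X\subseteq[0,1]$ --- not merely for intervals, which is what matters since allocations need not be contiguous. To see it, split $X$ along the partition $\{[c_{i,j-1},c_{i,j}]\}_j$: on a piece on which $v_i$ is constant, that constant must equal $\frac{1/m}{c_{i,j}-c_{i,j-1}}$, since both densities integrate to $1/m$ over the piece, so $\hat v_i=v_i$ there and this piece contributes $0$ to the difference; on a piece on which $v_i$ is not constant, $\hat V_i$ and $V_i$ of the part of $X$ lying in that piece both belong to $[0,1/m]$, so this piece contributes at most $1/m$ in absolute value, and at most $k$ pieces are of this second kind because $v_i$ has at most $k$ breakpoints. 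It is essential here that the partition is into pieces of equal \emph{value} (not equal length), so that the breakpoint-free pieces contribute exactly zero and the error does not scale with the number of intervals making up $X$.

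Finally I would verify each property in $P$ from this bound and the exact property of $\mathcal{M}$. Write $(A_1,\dots,A_n)=\mathcal{M}(\hat v_1,\dots,\hat v_n)$. If $\mathcal{M}$ is proportional, then $V_i(A_i)\ge\hat V_i(A_i)-\frac{\epsilon}{2}\ge\frac{1}{n}-\epsilon$. If $\mathcal{M}$ is envy-free, then $V_i(A_i)\ge\hat V_i(A_i)-\frac{\epsilon}{2}\ge\hat V_i(A_j)-\frac{\epsilon}{2}\ge V_i(A_j)-\epsilon$. For strategyproofness, a deviation by agent $i$ amounts to answering her cut queries according to some piecewise constant $v_i'$, from which the mechanism builds the proxy $\hat v_i'$, the other agents' proxies being unaffected; since $\mathcal{M}$ is strategyproof, $\hat V_i(\mathcal{M}_i(\hat v_i,\hat v_{-i}))\ge\hat V_i(\mathcal{M}_i(\hat v_i',\hat v_{-i}))$, and applying the approximation bound to the piece allocated to $i$ in each of the two profiles turns this into $V_i(\mathcal{M}_i(\hat v_i,\hat v_{-i}))\ge V_i(\mathcal{M}_i(\hat v_i',\hat v_{-i}))-\epsilon$. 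I expect the main obstacle to be precisely the approximation lemma of the previous paragraph --- the point being to probe by equal value, which makes the proxy exact on every breakpoint-free piece and so holds the error at $\frac{k}{m}$ even for arbitrary, non-contiguous allocated pieces. The reduction of a Robertson-Webb deviation to a valuation misreport of $\mathcal{M}$ is the only other thing to state with care, but it is immediate here because the mechanism asks each agent nothing but monotone cut queries; edge cases such as a non-hungry agent creating a zero-density tail inside a piece are harmless, as that piece simply counts among the at most $k$ non-constant ones.
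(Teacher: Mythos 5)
Your proposal is correct and follows essentially the same approach as the paper: probe each agent with equal-value cut queries, build a piecewise-constant proxy that matches the true density exactly on every query interval containing no breakpoint, bound the total error by (number of breakpoints) $\times$ (per-piece value) $< \epsilon/2$, and then transfer each exact property of $\mathcal{M}$ to its $\epsilon$-approximate version via that bound. The only cosmetic difference is that you partition into exactly $m=\lfloor 2k/\epsilon\rfloor+1$ pieces of value $1/m$ (so the last piece needs no query), whereas the paper uses $\lfloor 2k/\epsilon\rfloor$ pieces of value $\epsilon/(2k)$ plus a small leftover piece; both give the same query count and the same $\epsilon/2$ error bound.
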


\begin{proof}
As can be seen from the statement of the proposition, the main objective here is to draw a connection between direct-revelation mechanisms and mechanisms in the 
Robertson-Webb model. Since in the latter model we only have query access to the valuation functions of the agents, the first step is to basically ``learn'' their functions to 
a sufficiently good approximation using only \textbf{eval} and \textbf{cut} queries. Subsequently, once we have that, we can essentially feed these functions to  the 
direct-revelation mechanism in order to create a mechanism in the Robertson-Webb model that achieves an approximate version of all the properties (among strategyproofness, 
proportionality, and envy-freeness) satisfied by the direct-revelation mechanism. 

For the first step, given that an agent has a piecewise constant valuation function and $k$ is an upper bound on the maximum number of breakpoints in their valuation 
function, the key observation is that when given an $\epsilon > 0$ we can $\epsilon$-approximate their valuation function $v(x)$ with another piecewise constant function 
$w(x)$ such that for any piece of cake $X$, $ \int_X v(x)\, dx - \frac{\epsilon}{2} \: \leq \: \int_X w(x)\, dx \: \leq \: \int_X v(x)\, dx + \frac{\epsilon}{2}$. We show 
that this can be done with $\left \lfloor \frac{2k}{\epsilon} \right \rfloor$ \textbf{cut} queries in the following claim.

\begin{claim}
 Given an $\epsilon > 0$, we can use $\left\lfloor \frac{2k}{\epsilon} \right\rfloor$ \textbf{cut} queries to $\epsilon$-approximate any piecewise constant function $v(x)$ 
that has at most $k$ breakpoints by another piecewise constant function $w(x)$ such that $W([0, 1]) = 1$ and for any piece of cake $X$,  
 \begin{equation} \label{eqn18}
 \int_X v(x)\, dx - \frac{\epsilon}{2} \: \leq \: \int_X w(x)\, dx \: \leq \: \int_X v(x)\, dx + \frac{\epsilon}{2}.
\end{equation}
\end{claim}

\begin{proof} \let\qed\relax

Algorithm~\ref{algo4} describes how to convert $v(x)$ to $w(x)$ by using $\left\lfloor \frac{2k}{\epsilon} \right\rfloor$ \textbf{cut} queries. First thing to notice is that 
$W([0, 1]) = 1$ since $W([0, 1]) = \int_{x_0}^{x_1} c_1 + \cdots + \int_{x_{N-1}}^{x_N} c_N + \left( 1 - \frac{\epsilon}{2k} \cdot N\right) = 1$. So now, the only thing 
that's left to prove is to show that for any piece of cake $X$, $ \int_X v(x)\, dx - \frac{\epsilon}{2} \: \leq \: \int_X w(x)\, dx \: \leq \: \int_X v(x)\, dx + 
\frac{\epsilon}{2}$.

\begin{algorithm}[tb]
\begin{center}
\noindent\fbox{%
\begin{varwidth}{\dimexpr\linewidth-2\fboxsep-2\fboxrule\relax}
\begin{algorithmic}[1]
  \Input  An agent $i$'s piecewise constant valuation function $v_i(x)$, $\epsilon > 0$, and $k$ 
  \Output A piecewise constant function $w_i(x)$ that $\epsilon$-approximates $v_i(x)$
  \State $N \leftarrow \left \lfloor \frac{2k}{\epsilon} \right \rfloor$
  \State $x_0 \leftarrow 0, \: x_{N+1} \leftarrow 1$  
  \For{$j \in \{1, \cdots, N\}}$
    \State $x_j \leftarrow$ \textbf{cut}$\left(i, x_{j-1}, \frac{\epsilon}{2k}\right)$
    \State $c_j \leftarrow \frac{\epsilon}{2k \cdot (x_{j} - x_{j-1})}$  
    \State $w_i(x) = c_j, \:  x \in [x_{j-1}, x_j]$                       
  \EndFor
  \If {$X_N \neq 1$}
    \State $w_i(x) = \frac{\left( 1 - \frac{\epsilon}{2k} \cdot N\right)}{x_{N+1} - x_N}, \:  x \in [x_{N}, x_{N+1}]$ 
 \EndIf   
  \State return $w_i(x)$   
\end{algorithmic}
\end{varwidth}
}
\end{center}
\caption{$\epsilon$-approximation of a piecewise constant function using $\left\lfloor \frac{2k}{\epsilon} \right\rfloor$ \textbf{cut} queries.}
\label{algo4}
\end{algorithm}

To prove this, consider the case when $X$ is a collection of finitely many disjoint intervals (i.e., it is any arbitrary piece of cake). If we denote the breakpoints of $v$ 
by 
$\{b_1, \cdots, b_k\}$ (since there are only a maximum of $k$ breakpoints), where $0 = b_0  < b_1 < \cdots < b_k < b_{k+1} = 1$, and if we let $C_i = X \cap [x_{i-1}, x_i]$, 
then we know that $\int_{X} w(x)\, dx = \sum_{i=1}^{N+1} \int_{C_i} w(x)\, dx$. Additionally, also note that if $[x_{i-1}, x_{i}]$ does not contain any breakpoint $b_j$ of 
the original function $v(x)$, then $\int_{C_i} w(x)\, dx = \int_{C_i} v(x)\, dx$ since we know that $v(x)$ is constant over this interval (as there are no breakpoints) and so 
$v(x) = \frac{\epsilon}{2k \cdot (x_i - x_{i-1})} = w(x) , \forall x \in [x_{i-1}, x_{i}]$. This, along with the fact that $v(x)$ has at most $k$ breakpoints, implies that 
there are at most $k$ $C_i$'s for which $\int_{C_i} w(x)\, dx \neq \int_{C_i} v(x)\, dx$. Now, if we assume without loss of generality that these are $C_1 \cdots, C_k$, then 
using the fact that $\int_{C_i} w(x)\, dx \leq \ \frac{\epsilon}{2k}$ and $\int_{C_i} v(x)\, dx \leq  \frac{\epsilon}{2k}$ we have that for every $i \in \{1, \cdots, k\}$,
\begin{multline}\label{eqn22}
 \int_{C_i} w(x)\, dx \, - \, \frac{\epsilon}{2k} \, \leq \, 0  \, \leq \, \int_{C_i} v(x)\, dx \, \leq \, \frac{\epsilon}{2k} \,\leq \,\int_{C_i} w(x)\, dx \, + \, 
\frac{\epsilon}{2k}  \\
 \implies - \frac{\epsilon}{2k} \, \leq \, \int_{C_i} w(x)\, dx \, - \,  \int_{C_i} v(x)\, dx \,\leq \, \frac{\epsilon}{2k}.
\end{multline}
Therefore, since
\begin{align*}
 \int_{X} w(x)\, dx \, - \,  \int_{X} v(x)\, dx \, &= \, \sum_{i=1}^{N+1} \int_{C_i} w(x)\, dx \, - \,  \sum_{i=1}^{N+1} \int_{C_i} v(x)\, dx\\
 &= \, \sum_{i = 1}^{k} \int_{C_i} w(x)\, dx \, -  \, \sum_{i=1}^{k} \int_{C_i} v(x)\, dx, 
 \end{align*}
 we can now use equation~\ref{eqn22} to see that,
\begin{align*}
 &-\frac{\epsilon}{2} \, \leq \, \int_{X} w(x)\, dx \, - \, \int_{X} v(x)\, dx \, \leq \, \frac{\epsilon}{2}, 
\end{align*}
thus proving our claim that $W$ is an $\epsilon$-approximation of $V$. \hfill $\blacksquare$
\end{proof}

Now that we know that any utility function of an agent can be $\epsilon$-approximated using $\left \lfloor \frac{2k}{\epsilon} \right \rfloor$ \textbf{cut} queries, we can 
build a mechanism, $\mathcal{M}^{RW}$, in the Robertson-Webb model by just feeding these approximated valuation functions into the direct revelation mechanism $\mathcal{M}$. 
More formally, if $(v_1, \cdots, v_n)$ are the original valuation functions of the agents  and $(w_1, \cdots, w_n)$ are the corresponding $\epsilon$-approximated valuation 
functions, and $\mathcal{M}_i(v_1, \cdots, v_n)$ denotes the allocation made by $\mathcal{M}$ to agent $i$, then $\mathcal{M}^{RW}$ is such that,
\begin{equation*}
 \forall i \in \{1, \cdots, n\}, \quad \mathcal{M}_i^{RW} (v_1, \cdots, v_n) \: = \: \mathcal{M}_i (w_1, \cdots, w_n).  
\end{equation*}

It is clear that $\mathcal{M}^{RW}$ makes at most $n \left\lfloor \frac{2k}{\epsilon} \right\rfloor $ queries on each input and  so the only part that remains to be proved is 
to show that $\mathcal{M}^{RW}$ is $\epsilon$-$P_i$ for every property $P_i \in P$ satisfied by $\mathcal{M}$. To prove this let us consider each of the properties separately.

{{\textbf{i) When $\mathcal{M}$ satisfies $P_1$ (strategyproofness).}}} Let us assume without loss of generality that agent $i$ is the one trying to manipulate. 
Since the underlying direct-revelation mechanism is strategyproof, the only way for agent $i$ with a valuation function $v_i$ to manipulate $\mathcal{M}^{RW}$ is to pretend 
as 
if its function is some $v_i'$ and answer the queries accordingly. This in turn will result in $w_i'$ being the $\epsilon$-approximated function instead of $w_i$ if it 
reported truthfully. However, since $\mathcal{M}$ is strategyproof, we know that $\forall \: w_i', \, w_{-i}, \: W_i(\mathcal{M}_i(w_i, w_{-i})) \geq W_i(\mathcal{M}_i(w'_i, 
w_{-i}))$. This in turn implies that,
 \begin{multline*}
  V_i(\mathcal{M}_i(w_i, w_{-i})) \: + \: \frac{\epsilon}{2} \: \geq  \: W_i(\mathcal{M}_i(w_i, w_{-i})) \geq \\ W_i(\mathcal{M}_i(w'_i, w_{-i})) \: \geq \: 
V_i(\mathcal{M}_i(w'_i, w_{-i})) \: - \: \frac{\epsilon}{2}  \quad \text{\small (using equation~\ref{eqn18})} \\
\implies V_i(\mathcal{M}_i^{RW}(v_i, v_{-i})) \: + \: \frac{\epsilon}{2}  \: \geq \: W_i(\mathcal{M}_i^{RW}(v_i, v_{-i})) \: \geq \: \\ W_i(\mathcal{M}_i^{RW}(v'_i, v_{-i})) 
\: \geq \: V_i(\mathcal{M}_i^{RW}(v'_i, v_{-i})) \: - \: \frac{\epsilon}{2}. 
 \end{multline*}
And hence $ \forall \: v_i', \, v_{-i}$ we have that $V_i(\mathcal{M}_i^{RW}(v'_i,v_{-i})) \: - \: V_i(\mathcal{M}_i^{RW}(v_i, v_{-i})) \leq \epsilon$, or in other words 
that $\mathcal{M}^{RW}$ is $\epsilon$-strategyproof.  

{{\textbf{ii) When $\mathcal{M}$ satisfies $P_2$ (proportionality).}}} To prove that $\mathcal{M}^{RW}$ is ${\epsilon}$-proportional note that $\forall 
i$, $W_i(\mathcal{M}_i(w_i, w_{-i})) \geq \frac{1}{n}$ since $\mathcal{M}$ is proportional. This implies that, using equation~\ref{eqn18}, we have, 
\begin{multline*}
 V_i(\mathcal{M}_i(w_i, w_{-i})) \: + \: \frac{\epsilon}{2} \: \geq \: W_i(\mathcal{M}_i(w_i, w_{-i})) \: \geq \: \frac{1}{n}\\
 \implies V_i(\mathcal{M}_i^{RW}(v_i, v_{-i})) \: = \: V_i(\mathcal{M}_i(w_i, w_{-i})) \: \geq \: \frac{1}{n} \: - \frac{\epsilon}{2} \: > \: \frac{1}{n} \: - \: 
{\epsilon}.
\end{multline*}

{{\textbf{iii) When $\mathcal{M}$ satisfies $P_3$ (envy-freeness).}}} $\mathcal{M}^{RW}$ is ${\epsilon}$-envy-free since $\forall j \in [n]$, using 
equation~\ref{eqn18} and the fact that $\mathcal{M}$ is envy-free, we have,  
\begin{multline*}
 V_i(\mathcal{M}_i(w_i, w_{-i})) \: + \: \frac{\epsilon}{2} \: \geq \: W_i(\mathcal{M}_i(w_i, w_{-i})) \: \geq \\ \: W_i(\mathcal{M}_j(w_i, w_{-i})) \: \geq \: 
 V_i(\mathcal{M}_j(w_i, w_{-i})) \: - \: \frac{\epsilon}{2} \\
 \implies V_i(\mathcal{M}_i^{RW}(v_i, v_{-i})) \: \geq \: V_i(\mathcal{M}_j^{RW}(v_i, v_{-i})) \: - \: \epsilon.
\end{multline*}
This in turn proves our proposition. 
\end{proof}

Although, as mentioned above, it remains open as to whether a strategyproof and proportional mechanism exists, another goal one could still consider is on 
finding better proportional mechanisms (in terms of $\epsilon$-strategyproofness). Below, we begin the pursuit of this goal and as our final question in this paper ask if 
there are positive results at least regarding mechanisms that are proportional and $\epsilon$-strategyproof for some $\epsilon < (1 - \frac{1}{n})$. We answer it with a 
``Yes'' and in particular we first prove---by showing a bound on the maximum gain an agent can get by misreporting---that the well-known Even-Paz algorithm \citep{even84} 
satisfies the above criterion for the case when there are at least four agents. And following this, we present a minor modification to the same that has better incentive 
compatible properties than the original Even-Paz algorithm for the cases when there are two or three agents. For the sake of completeness, the Even-Paz algorithm is presented 
in Appendix~\ref{appendixB}.

\begin{proposition} \label{EPbounds}
 For $n \geq 2$ agents, the Even-Paz algorithm is deterministic, proportional, and $\epsilon$-strategyproof, where 
  \begin{equation*}
 \begin{aligned}[c]
\epsilon = \begin{cases} 
      \frac{1}{2} & \text{when $n = 2$ or $n = 4$} \\
      \frac{2}{3} & \text{when $n = 3$ or $n = 5$} \\
      1 - \frac{2}{n} & \text{when $n \geq 6$.}
   \end{cases}
\end{aligned}
\end{equation*}
\end{proposition}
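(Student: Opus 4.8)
The plan is to verify the three claims---determinism, proportionality, and $\epsilon$-strategyproofness---in that order, the first two being routine and the third carrying all the weight. Determinism is immediate: once a deterministic tie-breaking rule for coincident cut marks is fixed, every step of the recursion (each \textbf{cut} query, the choice of the $\lfloor m/2\rfloor$-th mark as the pivot, and the resulting left/right split) is uniquely determined. Proportionality is the classical telescoping argument for Even-Paz: when the $m$ agents currently assigned an interval $I$ are split at the $\lfloor m/2\rfloor$-th smallest of their cut marks, every agent routed to the left group (of size $\lfloor m/2\rfloor$) retains value at least $\tfrac{\lfloor m/2\rfloor}{m}V_i(I)$ and every agent routed to the right group (of size $\lceil m/2\rceil$) retains at least $\tfrac{\lceil m/2\rceil}{m}V_i(I)$; multiplying these ratios along the root-to-leaf path of the recursion telescopes to exactly $\tfrac1n$, so a truthful agent always receives value $\ge\tfrac1n$.

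For $\epsilon$-strategyproofness I would argue by induction on $n$, fixing an agent $i$ and the others' truthful reports $v_{-i}$ and comparing the execution on $(v_i,v_{-i})$ with the execution on an arbitrary misreport $(v_i',v_{-i})$. Since $v_{-i}$ is fixed, the $n-1$ other agents announce the same sorted first-round marks $y_1\le\cdots\le y_{n-1}$ in both runs, and only $i$'s mark moves; so in each run $i$ is routed either to the left group of size $k:=\lfloor n/2\rfloor$---in which case, by being the pivotal marker, she may push the first cut as far right as $y_k$ and receive an interval $[0,t]$ with $t\le y_k$---or to the right group of size $\lceil n/2\rceil$, in which case the first cut equals $y_k$ and she receives $[y_k,1]$. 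I would then case on which side $i$ is routed to in each of the two runs. (i) \emph{Opposite sides:} if $i$ is truthful-right/misreport-left, being truthful-right forces $V_i([0,y_k])\le\tfrac kn$, so the misreport nets her at most $\tfrac kn\le\tfrac12$ while truth gives at least $\tfrac1n$; the truthful-left/misreport-right case is symmetric with bound $\tfrac{\lceil n/2\rceil-1}{n}$; both quantities are below the target $\epsilon$ for every $n\ge2$. (ii) \emph{Right/right:} the first cut, the interval $[y_k,1]$, and the set of agents sharing it are identical in the two runs, so the entire discrepancy is a manipulation inside a genuine Even-Paz instance on $\lceil n/2\rceil$ agents over a piece worth $V_i([y_k,1])$ to $i$; the induction hypothesis bounds the gain there by $\epsilon(\lceil n/2\rceil)\,V_i([y_k,1])\le\epsilon(\lceil n/2\rceil)$, and one checks directly that $\epsilon(\lceil n/2\rceil)\le\epsilon(n)$ for the stated values. (iii) \emph{Left/left:} here the two sub-executions run on different intervals $[0,x_h^*]\subseteq[0,x_m^*]\subseteq[0,y_k]$ with the same $k-1$ companions; I would bound the misreport's utility by $V_i([0,x_m^*])$, bound the truthful utility from below by $\tfrac1k V_i([0,x_h^*])$ together with the proportionality fact $V_i([0,x_h^*])\ge\tfrac kn$ (as $i$ is truthful-left), account for the ``extra'' slice $V_i([x_h^*,x_m^*])$ that the misreport scoops at the first cut, and fold in the induction hypothesis for the $k=\lfloor n/2\rfloor$-agent sub-instance.

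The base cases $n=2$ and $n=3$ are where the recursion is too shallow for (ii)--(iii) to buy anything, and there one simply uses the universal bound ``gain $\le 1-(\text{truthful share})\le 1-\tfrac1n$,'' which equals $\tfrac12$ and $\tfrac23$. The entries $n=4$ and $n=5$ also fall outside the generic $1-\tfrac2n$: the left/left step for them bottoms out in sub-instances of two agents (for $n=4$) or of two and three agents (for $n=5$), where the inductive bound is too loose to recycle mechanically---already for two agents a manipulator can secure essentially the whole piece---so one has to combine the two-agent worst case with the first-cut slice by hand, which is exactly why those two table entries sit above $1-\tfrac2n$.

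I expect the crux to be closing case (iii) with the claimed constants rather than with the trivial $1-\tfrac1n$: one must show that the value a misreport can grab at the first cut (the slice $[x_h^*,x_m^*]$, squeezed between two consecutive marks $y_{k-1},y_k$ of the others) plus the value it can then extract inside the $\lfloor n/2\rfloor$-agent sub-instance, minus the proportional share $\ge\tfrac{\lfloor n/2\rfloor}{n}$ that the truthful run already secures on the complementary slice $[0,x_h^*]$, never exceeds $\epsilon(n)$. Carrying this out requires parity-dependent bookkeeping---$\epsilon$ depends on whether $n$, $\lfloor n/2\rfloor$, and so on are even or odd---and the separate hand-analysis of $n\in\{4,5\}$ noted above.
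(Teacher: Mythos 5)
Your overall decomposition---inducting on $n$, splitting on which side the truthful mark and the misreported mark land, reducing the right/right case to an $\lceil n/2\rceil$-agent sub-instance, and bounding the crossing cases by the value on the far side---is the skeleton the paper itself uses: it sets up $G(n)=\max\{G^L(n),G^R(n)\}$ with $G^R(n)\le G(\lceil n/2\rceil)$, bounds $G^L(n)\le 1-\tfrac2n$ directly, and unrolls the recurrence from $G(2)=\tfrac12$, $G(3)=\tfrac23$. Your cases (i), (ii), and base cases line up with the paper, and your remark that $\epsilon(\lceil n/2\rceil)\le\epsilon(n)$ for the tabulated values is exactly what closes the right branch.

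The genuine gap is case (iii), which you yourself flag as the crux and leave open, and the plan you sketch would not close it. Bounding the misreport by $V_i([0,x_m^*])$, the truthful run by $\tfrac1k V_i([0,x_h^*])$ with $V_i([0,x_h^*])\ge\tfrac kn$, and optimizing the resulting expression only yields $1-\tfrac1n$, not $1-\tfrac2n$; and stacking $\epsilon(\lfloor n/2\rfloor)$ on top of the slice $V_i([x_h^*,x_m^*])$ over-counts (already for $n=4$ this exceeds the target $\tfrac12$). What the paper uses instead is a monotonicity observation about the recursion tree: if $i$ expands the left piece by her lie, then at every later level each other agent's cut mark can only move weakly to the right (their value of the enlarged cake weakly increases), so the truthful run already hands $i$ everything she gets under the lie except the newly introduced middle slice $[d_{\lfloor n/2\rfloor},c_i']$; that slice is worth at most $\lceil n/2\rceil/n\le 1-\tfrac2n$, and a dichotomy on whether the lying allocation's true value exceeds $1-\tfrac1n$ finishes $G^L(n)\le 1-\tfrac2n$ with no inductive appeal on the left branch at all. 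Finally, as a smaller point, your attribution of the $n\in\{4,5\}$ anomalies to case (iii) is off: $n=4$ in fact satisfies the generic formula $1-\tfrac24=\tfrac12$, and the $n=5$ value $\tfrac23>1-\tfrac25$ comes from $G^R(5)=G(\lceil 5/2\rceil)=G(3)$, i.e.\ from your case (ii), not from any special left/left hand-analysis.
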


\begin{proof}
 
It is well known that the Even-Paz algorithm is deterministic and proportional. So below we only derive the bounds on $\epsilon$. Also, for the sake of simplicity, we keep 
the proof largely informal.

For the $n = 2$ and $n = 3$ cases, it is easy to construct examples where the gain from lying is as much as $(1 - \frac{1}{n})$---i.e., it is $\frac{1}{2}$ 
and $\frac{2}{3}$, respectively. So let us consider the case when there are $n \geq 4$ agents. Notice that in every instance of the Even-Paz algorithm (see 
Appendix~\ref{appendixB}), the procedure EP is run multiple times and during each run the procedure asks an agent $i$ for a cutpoint $c_i$ such that $V_i(a, c_i) = 
\frac{\lfloor \frac{k}{2} \rfloor}{k} \cdot V_i(a, b)$, where $[a, b]$ is the cake to be proportionally allocated among the $k = |S|$ agents in $S$. Throughout, we refer to 
each such run of the procedure EP as a ``step'' of the Even-Paz algorithm; therefore, the first step refers to the first run of the procedure EP where an agent $i$ is asked 
for a cutpoint $c_i$ such that $V_i(0, c_i) = \frac{\lfloor \frac{n}{2} \rfloor}{n}$.  

So now, let us consider the first step of the algorithm. Without loss of generality, we can assume that for $i, j \in \{1, \cdots, n\}$, if $i < j$, then $c_i \leq c_j$ 
(since the agents can always be relabeled to satisfy this). Additionally, let us call the piece $[0, c_{\lfloor \frac{n}{2} \rfloor}]$ as the `left piece', the piece 
$[c_{\lfloor \frac{n}{2}\rfloor + 1}, 1]$ as the `right piece', the piece $[c_{\lfloor \frac{n}{2} \rfloor}, c_{\lfloor \frac{n}{2}\rfloor + 1}]$ as the `middle piece', and 
also denote by $G(n)$ the maximum gain an agent can get by misreporting when there are $n$ agents. 

Next, notice that in the Even-Paz algorithm every agent either recurses on the left piece or on the piece $[c_{\lfloor \frac{n}{2} \rfloor}, 1]$ (which 
is the $\{$middle $\cup$ right$\}$ piece). So let us consider the following two cases separately: i) an agent $i$ who if reporting truthfully in the first step would recurse 
on the left piece (i.e., whose true report $c_i^t$ is such that $c^{t}_i \leq c_{\lfloor \frac{n}{2}\rfloor}$) and ii) an agent $j$ who if reporting truthfully in the first 
step would recurse on the piece $[c_{\lfloor \frac{n}{2} \rfloor}, 1]$ (i.e., whose true report $c_j^{t}$ is such that $c^{t}_j > c_{\lfloor 
\frac{n}{2}\rfloor}$). We will denote the gain for an agent in case i) and ii) to be $G^L(n)$ and $G^R(n)$, respectively. Also note that since every agent belongs to either 
case i) or case ii), the maximum any agent can gain by lying is $\max\{G^L(n), G^R(n)\}$.  

 {{\textbf{Case i): an agent $i$ such that $c^{t}_i \leq c_{\lfloor \frac{n}{2}\rfloor}$.}}} First, if $c^{t}_i \leq c_{\lfloor \frac{n}{2}\rfloor}$ is the true 
report of $i$ in the first step, then, when playing strategically, the maximum $i$ can gain by reporting a $c_i > c_{\lfloor \frac{n}{2}\rfloor + 1}$ is at most 
$(\frac{\lceil \frac{n}{2} \rceil}{n} - \frac{1}{n})$. This is so because she is guaranteed to get $\frac{1}{n}$ when playing truthfully (since the mechanism is proportional) 
and she gets at most $V_i(c_{\lfloor \frac{n}{2}\rfloor + 1}, 1) \leq \frac{\lceil \frac{n}{2} \rceil}{n}$ when misreporting (since the other players have not 
changed their reports). Therefore, given the fact that upper bound we want to prove is greater than $(\frac{\lceil \frac{n}{2} \rceil}{n} - \frac{1}{n})$, we need to only 
consider the case when $i$ misreports a $c_i \leq c_{\lfloor \frac{n}{2}\rfloor + 1}$. 

So now, let us consider the case when $i$ misreports a $c_i \leq c_{\lfloor \frac{n}{2}\rfloor + 1}$. Note that even in this case, in order to maximize the gain, $i$ needs to 
report $c_i$ in such a way that it expands the cake she's recursing on (i.e., report a $c_i$ such that $c_{\lfloor \frac{n}{2}\rfloor} < c_i \leq c_{\lfloor 
\frac{n}{2}\rfloor + 1}$). This is so because, if $i$ does not expand the left piece (which is the piece it is recurses on if reporting truthfully), then the maximum gain 
$i$ can gain by misreporting in the algorithm is $\left(V_i(0, c_{\lfloor \frac{n}{2}\rfloor})  -  \frac{V_i(0, c_{\lfloor \frac{n}{2}\rfloor})} {\lfloor 
\frac{n}{2}\rfloor}\right)$---the first term because that's the maximum cake $i$ has access to from the second level onwards and the second term because $i$ is guaranteed to 
get at least a $\frac {1}{\lfloor \frac{n}{2} \rfloor}$ fraction of $V_i(0, c_{\lfloor \frac{n}{2}\rfloor})$ if he reports truthfully.  And since $V_i(0, c_{\lfloor 
\frac{n}{2}\rfloor}) \leq 1$, this implies that the maximum gain $G^L(n) \leq 1 - \frac{1}{\lfloor \frac{n}{2}\rfloor} \leq 1 - \frac{2}{n}$. 

Therefore, we need to only consider the case when $i$ misreports and as a result expands the piece she recurses on (i.e., she reports a $c_i$ such that $c_{\lfloor 
\frac{n}{2}\rfloor} < c_i \leq c_{\lfloor \frac{n}{2}\rfloor + 1}$). Now, in this case, it can be seen that if she has to gain more than in the case when there is no 
expansion, then she has to get the middle piece (or at least some part of it; for the rest of the proof although we talk about the middle piece as a whole, all the 
arguments will be true even if $i$ is only getting a part of it). However, note that the algorithm always produces a contiguous allocation (i.e., all the agents are 
allocated a single continuous interval of the cake), and so the only way an agent can get the middle piece is if the piece she recurses on at every level contains this middle 
piece. 

Given this, let us consider the penultimate level where there are two agents ($i$ and some other agent, say, $k$). Here $i$ will get the middle piece only if her report is to 
the right of the other agent. Now, if the value of this piece (i.e., the one that is to the right of $k$'s report in the penultimate level) is greater than $(1 - 
\frac{1}{n})$ according to $i$'s true valuation, then it would have meant that the 
only part of the cake $i$ received extra as a result of lying is the middle piece. This is so because, every other agent is playing the same way irrespective of whether $i$ 
is lying or otherwise. Therefore, when $i$ is lying and she expands the piece at the first level, the key observation to make is that none of the reports of the other players 
``decrease''---i.e., in every level, whenever the agents are asked for their reports during the course of the algorithm, the reports that they specify are no less than what 
they would have specified in the case where $i$ was playing truthfully (this is so because, when the cake is expanded, the value for it for every agent increases or 
remains the same and so when the agents are asked for their reports, their reports can only be greater than the case when the cake wasn't expanded).  This in turn implies 
that agent $i$ when playing truthfully would have got everything except the middle piece (which was introduced as a result of lying) because if not then she would have ended 
up receiving less than $\frac{1}{n}$ which is impossible. Hence, following this line of reasoning, only the middle piece is extra and we know that the value of the middle 
piece is at most $\frac{\lceil \frac{n}{2} \rceil}{n} \leq 1 - \frac{2}{n}$. So now, the case that remains is when the value of this piece according to $i$'s true 
valuation is less than or equal to $(1 - \frac{1}{n})$. And in this case we know that the gain from lying is at most $(1 - \frac{2}{n})$ since she's guaranteed to get 
$\frac{1}{n}$ when playing truthfully. 

Therefore, in all the cases above, an agent $i$ such that $c^{t}_i \leq c_{\lfloor \frac{n}{2}\rfloor}$, cannot gain more than $(1 - \frac{2}{n})$. And hence $G^L(n) \leq (1 
- \frac{2}{n})$.

{\textbf{Case ii): an agent $j$ such that $c^{t}_j > c_{\lfloor \frac{n}{2}\rfloor}$.}} This case is much easier to argue. Note that if agent $j$ 
reports a $c_j < c_{\lfloor \frac{n}{2}\rfloor}$, then the gain is at most $(\frac{\lfloor \frac{n}{2} \rfloor}{n} - \frac{1}{n})$ and since the bound we want to prove is 
bigger we can ignore this. So now, if we ignore the case above, then it is easy to see that $j$ has no real control over the piece he's recursing on in the first 
step---unlike how it was in case i). Therefore, the worst case is when he has a value 1 for the piece he's recursing on. Since there are only ${\lceil \frac{n}{2} \rceil}$ 
players recursing along with her in this piece, we can treat this as a new instance with ${\lceil \frac{n}{2} \rceil}$ players. And so the maximum gain for $j$ in this case is 
at most $G({\lceil \frac{n}{2} \rceil})$---because if it is anything more then it would contradict the fact that $G({\lceil \frac{n}{2} \rceil})$ is the maximum an agent can 
gain in a ${\lceil \frac{n}{2} \rceil}$-player instance. 

Therefore, putting together cases i) and ii) we have, $ G(n) = \max\{G^L(n), G^R(n)\} = \max\{1 - \frac{2}{n},\, G\left({\left\lceil
\frac{n}{2}\right\rceil}\right)\}$. Now, we know that $G(2) = \frac{1}{2}, G(3) = \frac{2}{3}$, and so we get $G(4) = \frac{1}{2}$ and $G(5) = \frac{2}{3}$. For $n \geq 
6$, it is easy to see that $G(n) = 1 - \frac{2}{n}$ works. Hence, Even-Paz algorithm is deterministic, proportional and $\epsilon$-strategyproof, where 
 \begin{equation*}
 \begin{aligned}[c]
\epsilon = \begin{cases} 
      \frac{1}{2} & \text{when $n = 2$ or $n = 4$} \\
      \frac{2}{3} & \text{when $n = 3$ or $n = 5$} \\
      1 - \frac{2}{n} & \text{when $n \geq 6$.}
   \end{cases}
\end{aligned}
\end{equation*}
\end{proof}

Next, we show that the bounds on $\epsilon$ can be improved for the cases when there are two or three agents. In particular, we show that Algorithm~\ref{algo3} is 
proportional and $\epsilon$-strategyproof, where $\epsilon = (1- \frac{3}{2n})$ or $ \epsilon = (1-\frac{3}{2n} + \frac{1}{2n^2})$ depending on whether $n$ is even or 
odd, respectively. And so, for the cases when there are two or three agents, the gain is $\frac{1}{4}$ or $\frac{5}{9}$, respectively, as opposed to $\frac{1}{2}$ or 
$\frac{2}{3}$ in the Even-Paz algorithm. The main difference between the original Even-Paz algorithm (see Appendix~\ref{appendixB}) and the modified 
version presented here (Algorithm~\ref{algo3}) is in how at each level the piece $\left[d_{{\left\lfloor \frac{k}{2} \right\rfloor }}, d_{{\left\lfloor \frac{k}{2} 
\right\rfloor} + 1}\right]$ is handled. In the original Even-Paz algorithm (see procedure EP in Algorithm~\ref{algoEP} in Appendix~\ref{appendixB}), this piece is part of the 
piece that the agents in $S_R$ recurse on, while in the modifiedEP procedure above, at each level, this piece is proportionally split between the $k$ agents by again using the 
original Even-Paz algorithm.

\begin{algorithm}[t]
\begin{center}
\noindent\fbox{%
\begin{varwidth}{\dimexpr\linewidth-3\fboxsep-3\fboxrule\relax}
\begin{algorithmic}[1]
  \Procedurename modifiedEP$\left([a, b], S\right)$
   \Input $\left([a, b], S\right)$, where $[a, b] \subseteq [0, 1]$ is the cake to be proportionally allocated among $|S|$ agents 
  \State $k \leftarrow |S|$
  \If {$k == 1$} 
    \State \textbf{return} $([a, b])$
   \EndIf

  \For{each $i \in S$ }
    \State $c_i$ = \textbf{cut}$\left(i, a, \frac{\left\lfloor \frac{k}{2} \right\rfloor}{k} \cdot V_i(a, b)\right)$
  \EndFor
  
  \State Let $\{d_1, \cdots, d_{k} \}$ be the sorted order of the $c_i$'s (cutpoints), $S_L \leftarrow \left\{i \in S \mid c_i \leq d_{\left\lfloor \frac{k}{2} 
\right\rfloor}\right\}$, and  $S_R \leftarrow \left\{i \in S \mid c_i \geq d_{\left\lfloor \frac{k}{2} \right\rfloor + 1} \right\}$.

  \State Proportionally allocate the cake $\left[d_{{\left\lfloor \frac{k}{2} \right\rfloor }}, d_{{\left\lfloor \frac{k}{2} \right\rfloor} + 1}\right]$ 
among the $k$ agents using the Even-Paz algorithm. 

 \State Recursively call modifiedEP on $\left[a, d_{\left\lfloor \frac{k}{2} \right\rfloor}\right]$ and $\left[d_{\left\lfloor \frac{k}{2} \right\rfloor + 1}, 
b\right]$ with $S_L$ and $S_R$, respectively. Return the allocation formed by these recursive calls along with the proportional allocation of the middle piece (i.e., the 
piece $\left[d_{{\left\lfloor \frac{k}{2} \right\rfloor }}, d_{{\left\lfloor \frac{k}{2} \right\rfloor} + 1}\right]$).
\Statex
\Main 
     \State output the allocation returned by modifiedEP$\left([0, 1], S = \{1, \cdots, n\}\right)$
 \end{algorithmic}  
\end{varwidth}
}
\end{center}
\caption{Modified Even-Paz algorithm. }
\label{algo3}
\end{algorithm}

\begin{theorem} \label{ep-SP-gen}
 Algorithm~\ref{algo3} is deterministic, proportional, and $\epsilon$-strategyproof, where  
 \begin{equation*}
 \begin{aligned}[c]
\epsilon = \begin{cases} 
      1 - \frac{3}{2n} & \text{when $n$ is even} \\
      1 - \frac{3}{2n} + \frac{1}{2n^2} & \text{when $n$ is odd.} 
   \end{cases}
\end{aligned}
\end{equation*}
\end{theorem}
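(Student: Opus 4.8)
The plan is to follow the same template as the proof of Proposition~\ref{EPbounds}, adapted to the single structural change in Algorithm~\ref{algo3}: at every recursive level the middle piece $[d_{\lfloor k/2\rfloor}, d_{\lfloor k/2\rfloor + 1}]$ is now split proportionally among \emph{all} $k$ agents of that level rather than being absorbed into the part handed to $S_R$. Determinism is immediate. Proportionality follows by induction on $|S|$: each agent $i \in S_L$ has its own cutpoint $c_i$ inside $[a, d_{\lfloor k/2\rfloor}]$, so the left piece alone is worth at least $\frac{\lfloor k/2\rfloor}{k}V_i(a,b)$ to it, and by the induction hypothesis the recursive call on the left piece (with $\lfloor k/2\rfloor$ agents) awards $i$ at least a $\frac{1}{\lfloor k/2\rfloor}$-fraction of that, i.e.\ at least $\frac1k V_i(a,b)$; the situation for $S_R$ is symmetric, and the middle-piece split only adds nonnegative value. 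Hence every agent is guaranteed $\frac1n$, so the mechanism is proportional.

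For $\epsilon$-strategyproofness I would fix the manipulating agent, relabel the agents so that in the first call to \texttt{modifiedEP} the truthful cutpoints satisfy $c_1 \le \cdots \le c_n$, and let $G(n)$ be the maximum gain from misreporting. As in Proposition~\ref{EPbounds} I split on the manipulator's truthful first-step cut: the ``left'' case $c^t_i \le c_{\lfloor n/2\rfloor}$ (gain $G^L(n)$) and the ``right'' case $c^t_i > c_{\lfloor n/2\rfloor}$ (gain $G^R(n)$), so $G(n) = \max\{G^L(n), G^R(n)\}$. In the right case the argument is essentially that of Proposition~\ref{EPbounds}: the manipulator has no useful control over which piece it recurses on at the first step (shifting its cut into the left half only forfeits access to the larger $\lceil n/2\rceil$-agent sub-instance), so the worst case reduces to a fresh $\lceil n/2\rceil$-agent instance, giving $G^R(n) \le G(\lceil n/2\rceil)$.

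The substantive part is the left case, and it is exactly where splitting the middle piece pays off. A truthful left-type agent already collects (i) by proportionality at least $\frac1n$ from recursing on the left piece and (ii) a $\ge \frac1n$-share of the middle piece $M_1$ of the first step (and of every middle piece it meets deeper). When it instead deviates, the ``expanding the recursed-on piece never decreases the other agents' reports'' observation from Proposition~\ref{EPbounds} shows that honest play had already locked in everything except the sliver the deviation newly carves out; that sliver is bounded by what remains of the cake after removing the agent's guaranteed $\frac1n$ and the portion of $M_1$ it would already have received. Carrying this through the first step, with $\lfloor n/2\rfloor$ versus $\lceil n/2\rceil$ bookkeeping the sizes of the two halves and the value of the middle piece, yields $G^L(n) \le 1 - \frac{3}{2n}$ for even $n$ and $G^L(n) \le 1 - \frac{3}{2n} + \frac{1}{2n^2}$ for odd $n$; for $n = 2$ and $n = 3$ this analysis is exact, with tightness witnessed by an explicit manipulation analogous to the one used for the Even-Paz algorithm on two or three agents.

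Finally I would close the recursion $G(n) = \max\{G^L(n),\, G(\lceil n/2\rceil)\}$ from the base cases $G(2) = \frac14$ and $G(3) = \frac59$, checking that for every $n \ge 2$ the left-case term is the maximum, which produces the stated value of $\epsilon$. I expect the main obstacle to be the left-case bound: a deviation perturbs the first-step cutpoints, hence simultaneously moves the endpoints of $M_1$ and changes the membership of $S_L$ and $S_R$, so one must argue carefully --- as in Proposition~\ref{EPbounds} --- that truthful play had already secured all but a controllably small remainder, and then bound that remainder \emph{together with} the now-shared middle piece; the parity case split is what introduces the mildly delicate $\frac{1}{2n^2}$ correction for odd $n$.
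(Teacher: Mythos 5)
Your overall framing (split on the manipulator's truthful side, argue truthful play already locks in a large guaranteed value) is in the spirit of the paper, but your treatment of the \emph{right} case deviates from the paper's proof in a way that is actually incorrect. You claim $G^R(n)\le G(\lceil n/2\rceil)$ on the grounds that a right-type manipulator ``has no useful control over which piece it recurses on at the first step.'' That is false for Algorithm~\ref{algo3}: a right-type manipulator who shifts its cut to the left (while staying $\ge c_{\lfloor n/2\rfloor}$) moves $d_{\lfloor n/2\rfloor+1}$ leftward, thereby enlarging the right piece it recurses on and shrinking the middle piece — an option it did not have in the original Even-Paz. Concretely, with $n=2$ your bound would give $G^R(2)\le G(1)=0$, yet a right-type agent can gain: truthfully it gets $[c_2,1]$ (value $1/2$) plus at least half of $V_2(c_1,c_2)$, so about $3/4$ when $V_2(0,c_1)=0$; by reporting $c_2'\to c_1^+$ it pushes the right piece out to $[c_1,1]$ and captures value $\to 1$, a gain near $1/4$, not $0$. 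So the recursion $G(n)=\max\{G^L(n),G(\lceil n/2\rceil)\}$ you want to close cannot be the backbone of the proof.

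The paper instead handles \emph{both} cases by the same direct $L-T$ calculation. For a right-type manipulator (the same goes for the left case with $\lfloor\cdot\rfloor$ in place of $\lceil\cdot\rceil$): any misreport with $c_i'\ge c_{\lfloor n/2\rfloor}$ yields at most $L \le V_i(c_{\lfloor n/2\rfloor},c_{\lfloor n/2\rfloor+1})+V_i(c_{\lfloor n/2\rfloor+1},1)$, while truthful play secures $T \ge \frac{1}{\lceil n/2\rceil}V_i(c_{\lfloor n/2\rfloor+1},1)+\frac{1}{n}V_i(c_{\lfloor n/2\rfloor},c_{\lfloor n/2\rfloor+1})$ — the second term is precisely the proportional middle-piece share that the modification buys. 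Substituting $V_i(c_{\lfloor n/2\rfloor+1},1)\ge \frac{\lceil n/2\rceil}{n}$ and $L\le 1$ then gives $G\le\bigl(1-\tfrac1n\bigr)-\bigl(\tfrac{1}{\lceil n/2\rceil}-\tfrac1n\bigr)\tfrac{\lceil n/2\rceil}{n}$, which is $1-\tfrac{3}{2n}$ for even $n$ and $1-\tfrac{3}{2n}+\tfrac{1}{2n^2}$ for odd $n$; the left case yields $1-\tfrac{3}{2n}$ (even) and $1-\tfrac{3}{2n}-\tfrac{1}{2n^2}$ (odd), so the right case is the binding one for odd $n$. Also note a sign slip in your write-up: the $+\frac{1}{2n^2}$ correction belongs to the right case, not the left. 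Finally, the paper does not need the delicate ``expanding never decreases others' reports'' machinery from Proposition~\ref{EPbounds} here; the extra $\frac1n$ middle-piece guarantee makes a short quantitative argument suffice, and leaning on that observation (as you do) adds unnecessary fragility.
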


\begin{proof}
It is easy to see that Algorithm~\ref{algo3} is deterministic and proportional. To prove that it is $\epsilon$-strategyproof, we proceed like in the proof of 
Proposition~\ref{EPbounds} by considering the first step of the algorithm where each agent $i$ reports a point $c_i$ such that ${V_i(0, c_i)} = \frac{\lfloor \frac{n}{2} 
\rfloor}{n}$. Without loss of generality, we can assume that for $i, j \in \{1, \cdots, n\}$, if $i < j$, then $c_i \leq c_j$ 
(since 
the agents can always be relabeled to satisfy this). Also, it can be observed that all the agents ($1, \cdots, {\lfloor \frac{n}{2} \rfloor}$) recursing on the piece $[0, 
c_{\lfloor \frac{n}{2} \rfloor}]$ (henceforth also referred to as the `left piece') are symmetric---as in, after the first step all of them are treated in the same way. 
Similarly, all the agents (${\lfloor \frac{n}{2} \rfloor + 1}, \cdots, {n}$) recursing on the piece $[c_{\lfloor \frac{n}{2}\rfloor + 1}, 1]$ (henceforth also 
referred to as the `right piece') are also symmetric. Hence, for any agent $i$, and for their true report $c^t_i$ in step 1, we need to only consider two cases---i) $c^t_i 
\leq c_{\lfloor \frac{n}{2}\rfloor}$ (this corresponds to the case when $i$ will recurse on the left piece when playing  truthfully) and ii) $c^t_i \geq c_{\lfloor 
\frac{n}{2}\rfloor + 1}$ (this corresponds to the case when $i$ will recurse on the right piece when playing truthfully)---and then argue that the maximum $i$ can gain by 
reporting untruthfully is $(1- \frac{3}{2n})$ or $(1- \frac{3}{2n} + \frac{1}{2n^2})$ depending on whether $n$ is even or odd, respectively. 

 \textbf{Case i): $c^t_i \leq c_{\lfloor \frac{n}{2}\rfloor}$.} First, if $c^t_i \leq c_{\lfloor \frac{n}{2}\rfloor}$ is the true report of $i$, then, when playing 
strategically, if $i$ reports a $c_i > c_{\lfloor \frac{n}{2}\rfloor + 1}$, the maximum he can gain by doing so is at most $(\frac{\lceil \frac{n}{2} \rceil}{n} - 
\frac{1}{n})$. This is so because, he is guaranteed to get $\frac{1}{n}$ when playing truthfully (since the mechanism is proportional) and he gets at most $V_i(c_{\lfloor 
\frac{n}{2}\rfloor + 1}, 1) \leq \frac{\lceil \frac{n}{2} \rceil}{n}$ when playing untruthfully as the other players have not changed their reports. Therefore, given the fact 
that upper bound we want to prove is greater than $(\frac{\lceil \frac{n}{2} \rceil}{n} - \frac{1}{n})$, we need to only consider the case when $i$ misreports a $c_i \leq 
c_{\lfloor \frac{n}{2}\rfloor + 1}$. 

In this case, let $L$ be the maximum value he can get by lying. Since $i$ does not get any part of $[c_{\lfloor \frac{n}{2}\rfloor + 1}, 1]$ if he reports a $c_i \leq 
c_{\lfloor \frac{n}{2}\rfloor + 1}$, to get $L$ it has to be the case that $V_i(0, c_{\lfloor\frac{n}{2}\rfloor}) + V_i(c_{\lfloor\frac{n}{2}\rfloor}, 
c_{\lfloor\frac{n}{2}\rfloor 
+ 1}) \geq L$. On other hand, if $i$ plays truthfully, then he gets T, where 
\begin{align*}
 T &\geq \frac{1}{\lfloor \frac{n}{2} \rfloor} V_i(0, c_{\lfloor\frac{n}{2}\rfloor}) + \frac{1}{n} V_i(c_{\lfloor\frac{n}{2}\rfloor}, c_{\lfloor\frac{n}{2}\rfloor + 1}) \\ 
  &\geq \frac{1}{n} \left(V_i(0, c_{\lfloor\frac{n}{2}\rfloor}) + V_i(c_{\lfloor\frac{n}{2}\rfloor}, c_{\lfloor\frac{n}{2}\rfloor + 1})\right) + \left(\frac{1}{\lfloor 
\frac{n}{2} 
\rfloor} - 
\frac{1}{n}\right) V_i(0, c_{\lfloor\frac{n}{2}\rfloor}) \\
 &\geq \frac{1}{n} L + \left(\frac{1}{\lfloor \frac{n}{2} \rfloor} - \frac{1}{n}\right) \frac{\lfloor \frac{n}{2} \rfloor}{n}.
\end{align*}
 
Therefore, the maximum gain as a result of reporting untruthfully is,
\begin{align}
 G &\leq L - T \nonumber \\
 &\leq L\left(1- \frac{1}{n}\right) - \left(\frac{1}{\lfloor \frac{n}{2} \rfloor} - \frac{1}{n}\right) \frac{\lfloor \frac{n}{2} \rfloor}{n} \nonumber \\
 &\leq \left(1- \frac{1}{n}\right) - \left(\frac{1}{\lfloor \frac{n}{2} \rfloor} - \frac{1}{n}\right) \frac{\lfloor \frac{n}{2} \rfloor}{n}  &&\text{(since $L \leq 1$)} 
\nonumber \\ 
 &\begin{aligned}[c] \label{eq15}
 \leq  \begin{cases} 
      1- \frac{3}{2n} & \text{when $n$ is even} \\
      1- \frac{3}{2n} - \frac{1}{2n^2} & \text{when $n$ is odd}.
   \end{cases}
\end{aligned} 
\end{align}

\textbf{Case ii): $c^t_i \geq c_{\lfloor \frac{n}{2}\rfloor + 1}$.}  Like in case i), if $c^t_i \geq c_{\lfloor \frac{n}{2}\rfloor + 1}$ is the true report of $i$, 
then, 
when playing strategically, if $i$ reports a $c_i< c_{\lfloor \frac{n}{2}\rfloor}$, the maximum he can gain by doing so is at most $\left(\frac{\lfloor \frac{n}{2} 
\rfloor}{n} 
- \frac{1}{n}\right)$, which is less than the upper bound we want to prove. Hence, we need to only consider the case when $i$ misreports a $c_i \geq c_{\lfloor 
\frac{n}{2}\rfloor}$. 

In this case, let $L$ be the maximum value he can get by lying. Since $i$ does not get any part of $[0, c_{\lfloor \frac{n}{2}\rfloor}]$ if he reports a $c_i \geq c_{\lfloor 
\frac{n}{2}\rfloor}$, to get $L$ it has to be the case that $V_i(c_{\lfloor\frac{n}{2}\rfloor}, c_{\lfloor\frac{n}{2}\rfloor + 1} ) + V_i(c_{\lfloor\frac{n}{2}\rfloor + 1}, 
1) 
\geq L$. On other hand, if 
$i$ plays truthfully, then he gets T, where
\begin{align*}
 T &\geq \frac{1}{\lceil \frac{n}{2} \rceil} V_i(c_{\lfloor\frac{n}{2}\rfloor + 1}, 1) + \frac{1}{n} V_i(c_{\lfloor\frac{n}{2}\rfloor}, c_{\lfloor\frac{n}{2}\rfloor + 1}) \\ 
  &\geq \frac{1}{n} \left(V_i(c_{\lfloor\frac{n}{2}\rfloor + 1}, 1) + V_i(c_{\lfloor\frac{n}{2}\rfloor}, c_{\lfloor\frac{n}{2}\rfloor + 1})\right) + \left(\frac{1}{\lceil 
\frac{n}{2} \rceil} - 
\frac{1}{n}\right) V_i(c_{\lfloor\frac{n}{2}\rfloor + 1}, 1) \\
 &\geq \frac{1}{n} L + \left(\frac{1}{\lceil \frac{n}{2} \rceil} - \frac{1}{n}\right) \frac{\lceil \frac{n}{2} \rceil}{n}.
\end{align*}

Therefore, the maximum gain as a result of reporting untruthfully is,
\begin{align}
 G &\leq L - T \nonumber \\
 &\leq L\left(1- \frac{1}{n}\right) - \left(\frac{1}{\lceil \frac{n}{2} \rceil} - \frac{1}{n}\right) \frac{\lceil \frac{n}{2} \rceil}{n} \nonumber \\
 &\leq \left(1- \frac{1}{n}\right)  - \left(\frac{1}{\lceil \frac{n}{2} \rceil} - \frac{1}{n}\right) \frac{\lceil \frac{n}{2} \rceil}{n}  &&\text{(since $L \leq 1$)} 
\nonumber 
\\ 
 &\begin{aligned}[c] \label{eq16}
 \leq  \begin{cases} 
      1- \frac{3}{2n} & \text{when $n$ is even} \\
      1- \frac{3}{2n} + \frac{1}{2n^2} & \text{when $n$ is odd}.
   \end{cases}
\end{aligned} 
\end{align}

From equation~\ref{eq15} and~\ref{eq16}, we have that the mechanism is $\epsilon$-strategyproof, where
\begin{align*}
 \begin{aligned}[c]
\epsilon = \begin{cases} 
      1- \frac{3}{2n} & \text{when $n$ is even} \\
      1- \frac{3}{2n} + \frac{1}{2n^2} & \text{when $n$ is odd}. 
   \end{cases}
\end{aligned}  
\end{align*} 
\end{proof} 

Although the modifiedEP was designed to take care of the worst-case scenarios that arise in the original algorithm, it turns out that it works only for the cases when $n=2$ 
or $n=3$. In the all the other cases, even though it still eliminates those worst-cases, it is interesting to note that it does this at the cost of introducing some other 
worst-case scenarios---ones which do not occur in the original algorithm---essentially because of the fact that it makes non-contiguous allocations (unlike the 
original Even-Paz which makes contiguous allocations). 
\section{Discussion}
One of the main open questions raised by \citet{chen10, chen13} (and subsequently also by \citet{aziz14} and \citet{branzei15}) was on the existence of deterministic 
mechanisms that are strategyproof and fair for piecewise constant valuations. Here we addressed a special case of this and we showed (in Theorem~\ref{SP-prop-cont}) that for 
any $n \geq 2$ agents there is no deterministic mechanism that makes contiguous allocations and is strategyproof and even $\epsilon$-proportional. Although 
the contiguous allocations constraint captures some interesting scenarios, we believe that answering the above question without this (rather strong) constraint is 
important. In particular, when given the fact that there exists randomized mechanisms that are a) truthful in expectation, proportional, and envy-free for piecewise linear 
valuations in the direct-revelation model \citep{chen13} and b) truthful in expectation and $\epsilon$-proportional for any $\epsilon > 0$ and for general valuation functions 
in the Robertson-Webb model \citep[Theorem 3]{branzei15}, we believe that resolving this question can give us a clearer picture regarding the limits of determinism.

Moving away from the above result, in Section~\ref{wm} we showed (in Theorem~\ref{ep-SP-gen}) that there exists a proportional mechanism that has better 
incentive-compatible properties 
than the Even-Paz algorithm for the cases when there are two or three agents. With regards to this result, one of the main questions that arise is the following and we 
suggest it as an open problem: for $n \geq 2$ agents, what is the minimum achievable $\epsilon$ for which there exists a deterministic mechanism that is proportional and 
$\epsilon$-strategyproof? It is  unclear how we can provide a lower bound on mechanisms that are proportional and $\epsilon$-strategyproof, or improve the proposed mechanism 
to get smaller values of 
$\epsilon$. And also, at the same time---and this relates back to our results in Section~\ref{nwm}---there is the question of whether we can make Algorithm~\ref{algo3} 
non-wasteful when agents have concisely representable valuation functions (as in, functions like piecewise uniform, piecewise constant 
etc.). At first glance it seems like it should be possible to easily do this at least when there are only two agents and one possible scheme that seems benign is: once 
the allocation is done as per the above mechanism, ask the agents to exchange their zero pieces (the pieces they value at zero)---thus resulting in an outcome that is possibly 
a Pareto-improvement over the existing one. Unfortunately it only \textit{seems} benign, but actually isn't as it turns out that it can result in breaking the guarantee on 
strategyproofness. We illustrate this through the following example. 

\begin{example} 
 The main issue with making the mechanism non-wasteful by doing as suggested above is that if the agents know that they will receive the zero pieces in the end, then 
they no longer have the incentives to behave like the way they did in the original mechanism. For instance, consider the case of two agents, where 
 \begin{equation*}
 \begin{aligned}[c]
v_1(x) = \begin{cases} 
      0 & x \in [0, 0.5] \\
      2  & x \in [0.5, 1].
   \end{cases}
\end{aligned}
\quad 
 \begin{aligned}[c]
v_2(x) = \begin{cases} 
      1 & x \in [0, 0.5] \\
      0  & x \in [0.5, 0.8]\\
      2.5 & x \in [0.8, 1].
   \end{cases}
\end{aligned}
\quad 
 \begin{aligned}[c]
v'_2(x) = \begin{cases} 
      0.4 & x \in [0, 0.5] \\
      1 & x \in [0.5, 0.8]\\
      2.5 & x \in [0.8, 1].
   \end{cases}
\end{aligned}
\end{equation*}
Now, according to our original mechanism if the reports of the agents were $(v_1, v_2)$, then agent 1 would get $[0.75, 1]$ + some part of $[0.5, 0.75]$, and agent 2 gets 
$[0, 0.5]$ + some part of $[0.5, 0.75]$. And so the total utility for agent 2 is at most $0.5$. But now, once we add the reallocation stage to make the mechanism 
non-wasteful, agent 2 knows that during reallocation she is assured to get $[0, 0.5]$ and so she can misrepresent her utility function by reporting $v'_2(x)$. This in turn 
will result in agent 1 getting $[0, 0.75]$ + some part of $[0.75, 0.8]$, and agent 2 getting $[0.8, 1] +$ some part of $[0.75, 0.8]$. However, in the reallocation stage agent 
2 will also be given $[0, 0.5]$, thus getting a utility of $1$---which makes it a $1/2$-SP mechanism.
\end{example}

Using a similar (but slightly more involved) example it can be shown that another seemingly promising scheme to make it non-wasteful where we was ask the players to exchange 
the zero pieces before the start of the mechanism and then run the mechanism on the rest of the cake does not work as well. So given the apparent difficulty and our lower 
bound (Theorem~\ref{nwImp}) in Section~\ref{nwm}, another problem that remains open is whether we can guarantee non-wastefulness along with proportionality and 
$\epsilon$-strategyproofness, where $\frac{1}{3n} \leq \epsilon < (1- \frac{1}{n})$.

\section*{Acknowledgments}
We thank Simina Br{\^a}nzei for useful discussions. 

\bibliographystyle{named}
\bibliography{paperDraft}   
  
\appendix 
\newpage
\section{Even-Paz algorithm} \label{appendixB}
\begin{algorithm}[bht]
\begin{center}
 \noindent\fbox{%
\begin{varwidth}{\dimexpr\linewidth-3\fboxsep-3\fboxrule\relax}
\begin{algorithmic}[1]
  \Procedurename EP$\left([a, b], S\right)$
   \Input $\left([a, b], S\right)$, where $[a, b] \subseteq [0, 1]$ is the cake to be proportionally allocated among $|S|$ agents 
  \State $k \leftarrow |S|$
  \If {$k == 1$} 
    \State \textbf{return} $([a, b])$
   \EndIf

  \For{each $i \in S$ }
    \State $c_i$ = \textbf{cut}$\left(i, a, \frac{\left\lfloor \frac{k}{2} \right\rfloor}{k} \cdot v_i(a, b)\right)$
  \EndFor
  
  \State Let $\{d_1, \cdots, d_{k} \}$ be the sorted order of the $c_i$'s (cutpoints).   $S_L \leftarrow \left\{i \in S \mid c_i \leq d_{\left\lfloor \frac{k}{2} 
\right\rfloor}\right\}$ and  $S_R \leftarrow \left\{i \in S \mid c_i > d_{\left\lfloor \frac{k}{2} \right\rfloor}\right\}$.

 \State Recursively call EP on $\left[a, d_{\left\lfloor \frac{k}{2} \right\rfloor}\right]$ and $\left(d_{\left\lfloor \frac{k}{2} \right\rfloor}, 
b\right]$ with $S_L$ and $S_R$, respectively. Return the allocation formed by these recursive calls.
\Main
\State output the allocation returned by EP$\left([0, 1], S = \{1, \cdots, n\}\right)$
 \end{algorithmic}  
\end{varwidth}
 }
\end{center}
\caption{Even-Paz algorithm. }
\label{algoEP}
\end{algorithm}

\end{document}